\renewcommand\@fnsymbol[1]{\ifcase#1\or †\or \dagger\or \ddagger\else\@arabic{#1}\fi}
\newcommand{\ML}{\textsc{Camal}}
\newcommand{\revise}[1]{\textcolor{black}{#1}}%
\begin{document}

\title{CAMAL: Optimizing LSM-trees via Active Learning}

\author{Weiping Yu}
\affiliation{%
  \institution{Nanyang Technological University}
  \country{Singapore}}
\email{WEIPING001@e.ntu.edu.sg}

\author{Siqiang Luo}
\authornote{Corresponding Author}
\affiliation{%
  \institution{Nanyang Technological University}
  \country{Singapore}}
\email{siqiang.luo@ntu.edu.sg}

\author{Zihao Yu}
\affiliation{%
  \institution{Nanyang Technological University}
  \country{Singapore}}
\email{zihao.yu@ntu.edu.sg}

\author{Gao Cong}
\affiliation{%
  \institution{Nanyang Technological University}
  \country{Singapore}}
\email{GAOCONG@ntu.edu.sg}

\renewcommand{\shortauthors}{Weiping Yu, Siqiang Luo, Zihao Yu \& Gao Cong.}

\begin{abstract}
We use machine learning to optimize LSM-tree structure, aiming to reduce the cost of processing various read/write operations. We introduce a new approach {\ML}, which boasts the following features: (1) {\bf ML-Aided}: {\ML} is the first attempt to apply active learning to tune LSM-tree based key-value stores. The learning process is coupled with traditional cost models to improve the training process; (2) {\bf Decoupled Active Learning}: backed by rigorous analysis, {\ML} adopts active learning paradigm based on a decoupled tuning of each parameter, which further accelerates the learning process; (3) {\bf Easy Extrapolation}: {\ML} adopts an effective mechanism to incrementally update the model with the growth of the data size; \textcolor{black}{
(4) {\bf Dynamic Mode}: {\ML} is able to tune LSM-tree online under dynamically changing workloads; (5) {\bf Significant System Improvement}: By integrating {\ML} into a full system RocksDB, the system performance improves by 28\% on average and up to 8x compared to a state-of-the-art RocksDB design.} 
\end{abstract}

\begin{CCSXML}
<ccs2012>
<concept>
<concept_id>10002951.10002952</concept_id>
<concept_desc>Information systems~Data management systems</concept_desc>
<concept_significance>500</concept_significance>
</concept>
</ccs2012>
 <ccs2012>
   <concept>
       <concept_id>10003752.10003809.10010031</concept_id>
       <concept_desc>Theory of computation~Data structures design and analysis</concept_desc>
       <concept_significance>500</concept_significance>
       </concept>
 </ccs2012>
\end{CCSXML}

\ccsdesc[500]{Information systems~Data management systems}
\ccsdesc[500]{Theory of computation~Data structures design and analysis}

\keywords{LSM-tree, optimization, active learning}

\received{January 2024}
\received[revised]{April 2024}
\received[accepted]{May 2024}

\maketitle

\section{Introduction}\label{sec:intro}
\textbf{LSM-Tree based Key-Values Stores.}
Key-value stores, increasingly prevalent in industry, underpin applications in social media~\cite{armstrong2013linkbench,bortnikov2018accordion}, stream and log processing~\cite{cao2013logkv,chen2016realtime}, and file systems~\cite{jannen2015betrfs,shetty2013building}. Notably, platforms like RocksDB~\cite{rocksdb} at Facebook, LevelDB~\cite{google-leveldb} and BigTable~\cite{chang2008bigtable} at Google, HBase~\cite{hbase} and Cassandra~\cite{cassandra} at Apache, X-Engine~\cite{huang2019x} at Alibaba, WiredTiger~\cite{WiredTiger} at MongoDB, and Dynamo~\cite{decandia2007dynamo} at Amazon extensively utilize Log-Structured Merge (LSM) trees~\cite{o1996log} for high-performance data ingestion and fast reads.

An LSM-tree is a multi-level data structure that operates on key-value pairs. The top level of the LSM-tree has a smaller capacity and stores the freshest data, while the lower levels can hold exponentially more data but with progressively older timestamps. Initially, data is inserted into Level-0 (a.k.a. buffer level), until it is full and sort-merged into the next deeper level. The capacity of Level $i+1$ is $T$ times than that of Level $i$, where $T$ is referred to as the size ratio. Similar merge behavior happens in any two consecutive levels, leading to multiple sorted runs in each level, where each run has an associated Bloom filter to facilitate lookups.


\vspace{1mm}
\noindent\textbf{Instance-Optimized LSM-Trees.}
LSM-trees are commonly used in supporting diverse workloads with varying percentages of operation types, such as {\it point lookups}, {\it range lookups} and {\it data writes}. The point lookup (resp. range lookup) is a query that extracts the value (resp. values) corresponding to a given key (resp. key range), whereas the data writes are operations to insert, delete or update the value for a key. The various possibility of the workload raises a crucial question of how to select suitable parameters (e.g., size ratio, compaction policy, memory allocation  between buffer level and Bloom filters) to construct an LSM-tree optimized for a given workload, leading to the notion of instance-optimized LSM-trees. 

Several studies~\cite{huynh2022endure,dayan2018dostoevsky,dayan2019log,huynh2023flexibility} have explored instance-optimized LSM-tree designs, with Dostoevsky~\cite{dayan2018dostoevsky} and K-LSM~\cite{huynh2023flexibility} focusing on compaction policy tuning, LSM-Bush~\cite{dayan2019log} discussing the choice of size ratio between adjacent levels, and Endure~\cite{huynh2022endure} investigating LSM-tree tuning under workload uncertainty. 
\begin{figure}[t!]
\centering
  \includegraphics[width=0.6\linewidth]{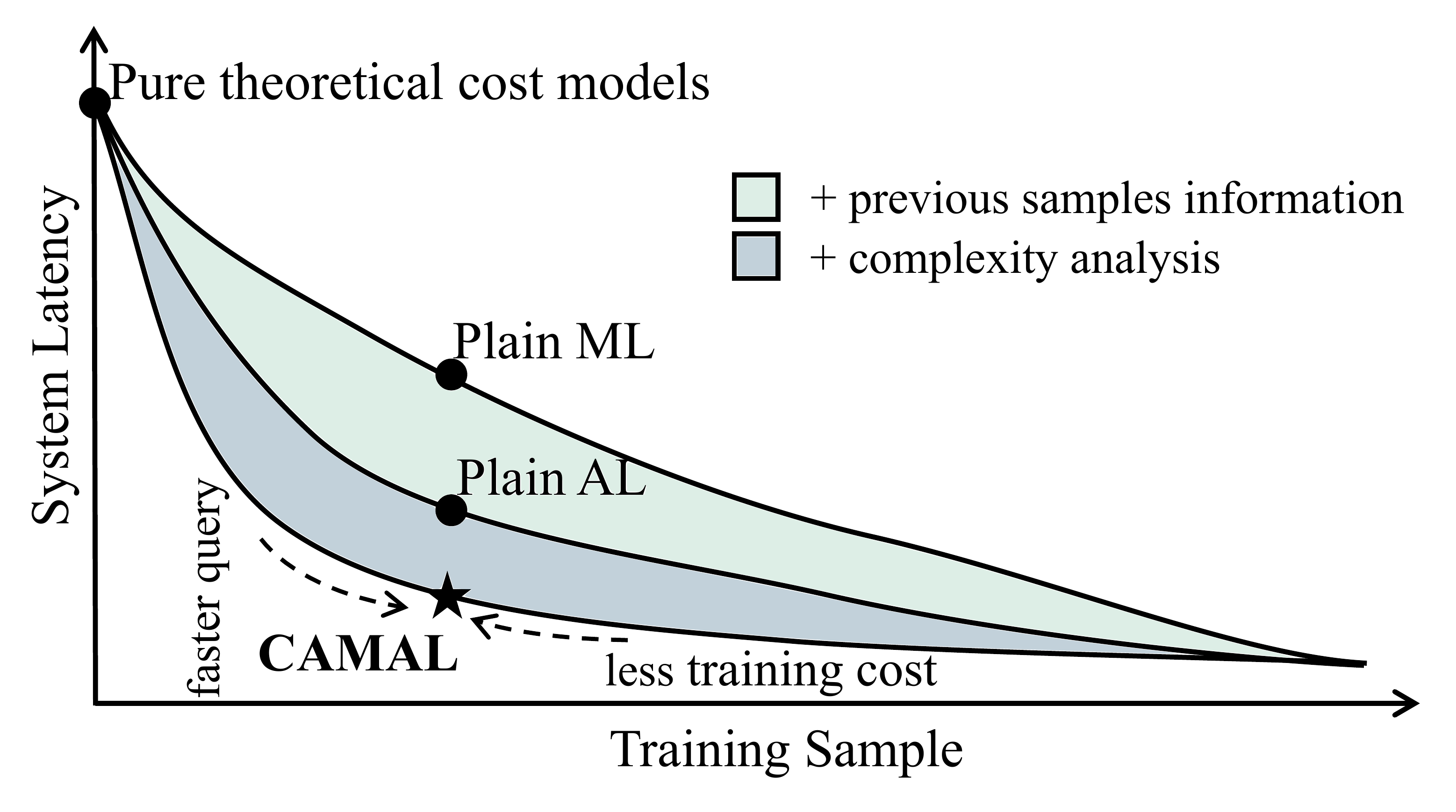}  
  \caption{{Illustrating plain ML approach (e.g., polynomial regression), plain active learning approach and our {\ML} regarding the tradeoff between the training samples and system performance.}
  }
\label{fig:intro}
\end{figure}
All these methods can be classified as {\it complexity-based} models, which primarily rely on complexity analysis to predict the I/O cost of each operation. 
Orthogonal to these approaches, in this paper we ask whether machine learning (ML) can give an even more finer-grained tuning of LSM-tree based key-value stores. The potential of machine learning based methods is that the mapping between system knobs and performance is directly captured in an end-to-end manner, instead of being implicitly obtained via a complexity-based cost model. Such machine-learning aided approach has been proven effective in other data structures such as search trees~\cite{lin2022learning} and spatial data structures~\cite{gu2023rlr}, yet it has been rarely explored for disk-resident data structures such as LSM-trees. It is important to note that the goal of using ML for LSM-tree tuning is not to replace the techniques mentioned above; instead, we aim to explore the power of ML in tuning LSM-trees, 
seeing how to harness the strengths of these two kinds of methods, and this hybrid technique would be one key element in our design. 

\vspace{1mm}
\noindent\textbf{Our Idea: Active learning for instance-optimized LSM-trees.} 
To achieve ML-based tuning, a plain ML approach is expected to train a model, which predicts the LSM-tree performance for a given workload, and is later used for searching a \revise{desired} configuration of the LSM-tree. The model training process starts with collecting samples, followed by feeding the samples to the model for fitting the parameters. A sample is in the form of $(W, X, Y)$, where $W$ is a workload, $X$ is a point in the configuration space formed by all possible ranges for each tunable parameter, and $Y$ is the true running time for workload $W$ using the LSM-tree constructed by parameters $X$. 

{Unfortunately, the cost of sample collection can be prohibitive. Evaluating the system performance $Y$ in a sample requires the ingestion of the entire dataset into the LSM-tree system, followed by the execution of a sufficiently large workload $W$ comprising numerous queries. Consequently, obtaining a single sample with 10GB data may consume several minutes~\cite{huynh2022endure}, with the time frame escalating at an exponentially rapid pace as the dataset grows.} Moreover, given the extensive configuration space, a considerable number of  samples are required to ensure the model's accuracy,  leading to an impractical cost. 

To address the challenge, we turn to the {\it active learning} (AL) paradigm~\cite{ma2020active,mozafari2012active,huang2018optimization}, which has been proven capable of refining the sample quality in an interactive manner, and thus fewer samples are required to secure the model accuracy. For a training workload, AL involves iterative sampling rounds until the given sampling budget is exhausted. Initially the sample set only contains one set of random samples. Each round commences with training an ML model using the existing samples in the set. The model then identifies and selects a new sample that is predicted to have the lowest cost by the newly trained model. This new sample is subsequently added to the sample set, and if the budget permits, the next sampling round is triggered. As the training process advances through each round, the ML model's accuracy progressively improves. The quality of the selected samples also benefits from this refinement over time.

\vspace{1mm}
\noindent
{\bf Challenges and New Designs.} Although active learning effectively refines the exploration in configuration space, we foresee three challenges when applying it to optimize LSM-trees.  Firstly, by default, the initial sample is chosen randomly, which may deviate significantly from the true optimum, requiring additional rounds of exploration. Secondly, the tunable parameters are not effectively decoupled throughout the sampling process, failing to narrow down the configuration space during the process. { Lastly, in the presence of dynamic workloads, re-training becomes necessary and introduces extra costs to maintain model accuracy.}


 The benefits of our approach {\ML} stem from the following novel designs, which address the challenges respectively.   

\vspace{1mm}

 \noindent\textbf{Design 1: Complexity-analysis driven techniques to avoid random initialization.} 
Our insight is that the optimal LSM-tree
parameter settings obtained through a complexity-based
cost model (e.g., the one in~\cite{dayan2017monkey}) still provides much better results than a random sample, although they may not be the true optimum within the configuration space. As a result, this cost model can efficiently and effectively guide active learning to pinpoint the neighborhood containing the true optimum.
Specifically, we can initiate the active learning process with the theoretically optimal solution obtained from the complexity analysis in each training workload, to significantly prune the sampling space for each parameter. 


\vspace{1mm}
\noindent\textbf{Design 2: Decoupling parameters for faster approaching a \revise{desired} solution.}
To reduce the vast sampling space due to the combination of different parameters, we propose a novel {\it hierarchical sampling} technique that decouples each parameter from the complex I/O model. 
We have discovered that although the parameters may be correlated, their optimal settings can be relatively independent, allowing tuning the parameter one round at a time. 
In particular, we establish a theoretical foundation, allowing us to first assess the \revise{desired} values of one parameter, and then address more intricate dependencies of other parameters to guide the sampling rounds in active learning.


 \vspace{1mm}
\noindent\textbf{Design 3: Extrapolation strategy for data growth.} While decoupled active learning effectively narrows the sampling space, the training cost tends to rise with larger data sizes. To accommodate data growth, we have theoretically proven that it is possible to rapidly transition to new \revise{tuned} parameters from existing ones without retraining. This method helps avoid the exponential increase in training costs, leading to greater efficiency.

 \vspace{1mm}{
\noindent\textbf{Design 4: Dynamic mode.}
In practical scenarios, system faces dynamically changing workloads. 
We then design the dynamic LSM-tree based on the extrapolation strategy to combat the challenge, see Section~\ref{sec:dynamic}.
}


 \vspace{1mm}

\noindent\textbf{Contributions.} In summary, Figure~\ref{fig:intro} illustrates {\ML}'s position relative to other ML approaches based on our experimental evaluation in Section~\ref{sec:eva}. Our contributions are summarized as follows.
\begin{itemize}[itemsep=5pt,leftmargin=*]
\item We propose a new model named {\ML}, the first attempt to apply active learning for LSM-tree instance optimizations. It integrates the complexity-based models to align active learning to the LSM-tree context 
(\S~\ref{sec:overview}).

\item We present a novel hierarchical sampling technique, which is particularly designed for LSM-tree applications, to reduce the sampling space, {significantly cutting down training time and improving practical usability} 
(\S~\ref{sec:theory}).

\item Our model {\ML} embraces data growth, in that it  reasonably extrapolates the \revise{desired} settings without retraining 
(\S~\ref{sec:extrap}).

\item 
{
Equipped with the extrapolation strategy, we have enhanced {\ML} to handle dynamically changing scenarios. We further introduce a novel design named DLSM, an LSM-tree variant specifically engineered to adapt to dynamic workloads (\S~\ref{sec:dynamic}).}

\item  
We examine three widely used ML models to be embedded into {\ML} and discuss their benefits and drawbacks (\S~\ref{sec:ml}).

\item We integrated our method with the widely-used LSM key-value database, RocksDB, to demonstrate its practicality. Our approach can significantly reduce latency by up to 8x, compared to a state-of-the-art RocksDB design 
(\S~\ref{sec:eva}).

\end{itemize}

\section{Background}
\label{sec:background}

\begin{figure*}
\raisebox{29mm}{
\hspace{-18mm}
    \begin{subfigure}{0.4\textwidth}
        \centering
        \label{table:terms}
\renewcommand{\arraystretch}{1.02}
    \setlength{\tabcolsep}{3pt} 
\footnotesize
\begin{tabular}{l||l||l}
\textbf{Term} & \textbf{Definition}     & \textbf{Unit} \\ \hline\hline
$N$    & total number of entries & entries       \\ 
$L$    & number of levels        &   levels            \\
$B$    & number of entries that fit in a storage block&  entries \\ 
$E$    &size of a key-value entry   &       bits       \\ 
$T$    &size ratio between adjacent levels   &               \\ 
$T_{lim}$  &size ratio at which $L$ converges to 1  &         \\ 
$M_b$  &memory allocated for write buffer  &   bits      \\ 
$M_f$  &memory allocated for bloom filters  &   bits      \\ 
$M_c$  &memory allocated for block cache  &   bits      \\ 
$M$  &total memory ($M_b+M_f+M_c$) &    bits    \\ 
$v$  &percentage of zero-result point lookups &             \\ 
$r$  &percentage of non-zero-result point lookups &             \\ 
$q$  &percentage of range lookups &             \\ 
$w$  &percentage of writes &             \\ 
$s$  &selectivity of range lookups &  entries           \\ 
\end{tabular}
    \end{subfigure}}
    \hspace{19mm}
     \begin{subfigure}{0.35\textwidth}
        \centering
          \label{tab:cost}
  \renewcommand{\arraystretch}{1.4}
    \footnotesize
  \begin{tabular}{l||l||l} 
    \multicolumn{1}{l||}{\textbf{Operation}} & \multicolumn{1}{l||}{\textbf{Leveling}} & \textbf{Tiering} \\ \hline\hline
\multicolumn{1}{m{2cm}||}{zero-result point lookups ($V$)}
&   $e^{-\frac{M_f}{N}}$  & $e^{-\frac{M_f}{N}}\cdot T$       \\
\multicolumn{1}{m{2cm}||}{non-zero-result point lookups ($R$)}
&  $e^{-\frac{M_f}{N}}+1$  &  $e^{-\frac{M_f}{N}}\cdot T+1$ \\
range lookups  ($Q$)   & $L+\frac{s}{B}$ & $L\cdot T+\frac{T\cdot s}{B}$\\
writes ($W$)  &    $\frac{L\cdot T}{B}$ &    $\frac{L}{B}$ \\
\end{tabular}

\vspace{4mm}

  \includegraphics[width=1.35\textwidth]{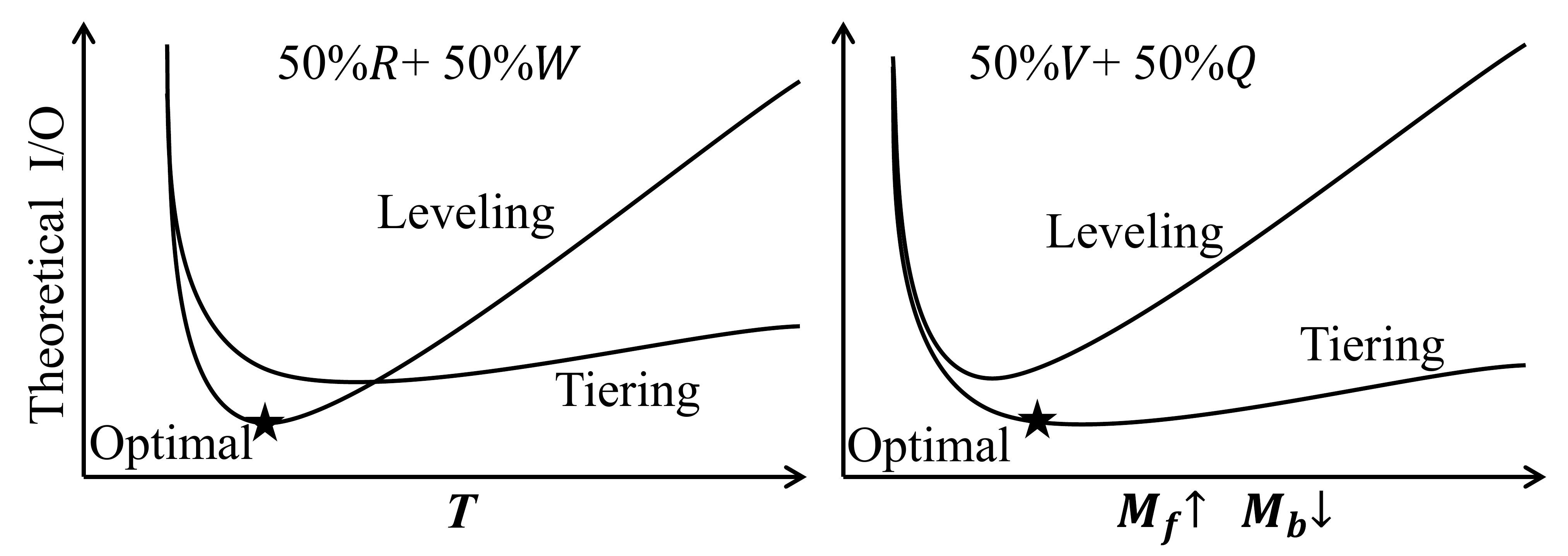}  
\end{subfigure}
      \caption{Parameters of LSM-trees and workloads used throughout the paper, complexity-based I/O cost models, and base sampling space of instance-optimized LSM-trees.}
    \label{fig:example}
\end{figure*}

\vspace{1mm}
\noindent\textbf{LSM-tree Structure.} An LSM-tree is structured with multiple levels, where each level contains one or multiple sorted runs. New data is initially stored in an unsorted state in a memory buffer, whose filled-up will trigger a write to the first level of the LSM-tree. As each level becomes full, its data is sorted and merged into the next level recursively. During point lookup, the LSM-tree searches the most recent level first, followed by lower levels until it finds the matching data. Additionally, Bloom filters optimize point lookup by efficiently determining if a key exists in a sorted run without performing I/Os. The capacity of each level in an LSM-tree grows by a size ratio of $T$. Therefore, the number of levels $L$ is determined by the size ratio $T$, the memory allocated to the write buffer $M_b$, and the size of the data. Level $i$ has $\frac{M_b}{E} \cdot (T-1)T^{i-1} $ entries~\cite{huynh2022endure}, where $E$ represents the size of an entry. If there are $N$ entries in the LSM-tree, the number of levels can be calculated as:
\begin{equation}
\label{equ:level}
    L=\left\lceil \log_T (\frac{N\cdot E}{M_b}+1)  \right\rceil
\end{equation}
In line with Dostoevsky~\cite{dayan2018dostoevsky}, we restrict the size ratio range to $2 \leq T \leq T_{lim}$, where $T_{lim}$ is defined as ${N\cdot E}/{M_b^{lim}}$. 

\vspace{1mm}
\noindent\textbf{Workload.} Following previous works~\cite{dayan2017monkey,dayan2018dostoevsky,huynh2022endure}, key-value databases commonly involve four types of operations: zero-result point lookups, non-zero-result point lookups, range lookups, and writes. In the case of point lookups, a single key is searched in the database, and the Bloom filter is often used to determine whether the key is located in the block before proceeding with the actual block reading. Point reads can be further divided based on whether the key exists in the database, categorized as zero-result point lookups and non-zero-result point lookups. Range lookups aim to find the most recent versions of all entries within a specified key range. This process entails merging the relevant key range across all runs at every level while sorting the entries. 
Writes, including inserts, deletes, and modifications, typically involve appending a new key in the write buffer instead of locating the older version for in-place updates. 
In practice, workloads for key-value databases are often comprised of varying proportions of the four operations. 

\vspace{1mm}
\noindent\textbf{Complexity-analysis based Model.} The complexity-analysis based Model, or theoretical I/O model, analyzes the number of I/Os generated per query, based on a given workload and basic parameters of an LSM-tree. We adopt the state-of-the-art Bloom filter bits-allocation strategy proposed in Monkey~\cite{dayan2017monkey}, which designs different false positive rates (FPRs) for Bloom filters in different levels to achieve optimal memory allocation. We also follow the models derived in Monkey~\cite{dayan2017monkey} for the four workload types, as shown in Figure~\ref{fig:example}. These models have been widely recognized and adopted in later works, such as Dostoevsky~\cite{dayan2018dostoevsky} and Endure~\cite{huynh2022endure}. 
{It is important to note that although we use a relatively simple complexity-based model, it serves the purpose in our hybrid framework that combines the complexity-based model and ML model because the goal of using the complexity-based model is to {\it estimate} a reasonable range. Our framework can integrate with a more sophisticated model when necessary. 
}
According to the cost model of each operation in Figure~\ref{fig:example}, the overall average cost can be calculated based on the known proportions ($v,r,q,w$) of the four operations:
\begin{equation}
    f = v \cdot V + r \cdot R + q \cdot Q+ w \cdot W
    \label{equ:cost}
\end{equation}

\noindent\textbf{Objectives.}
This paper aims to minimize the end-to-end latency incurred by the input workload, given the knowledge of the system and workload. We consider both the static mode (Sections~\ref{sec:overview}-\ref{sec:extrap}) where the workload is stable, and the dynamic mode (Section~\ref{sec:dynamic}) where the workload changes online. Figure~\ref{fig:example} provides examples of how the performance varies across different workloads based on specific parameters. The system includes information such as the number of entries $N$, the total memory budget $M$, the size of an entry $E$, and the number of entries $B$ that fit in a storage block. The workload mainly includes the proportion of each operation ($v,r,q,w$). The tunable parameters we consider include the size ratio $T$, the memory budget $M_b$ for the writer buffer (Level 0 of the LSM-tree), the memory budget $M_{f}$ for the Bloom filter, and the compaction policy (leveling or tiering). 

\revise{The reason we select end-to-end latency as the optimization objective is that it is one of the most straightforward metrics to reflect the system performance, evaluated in many studies~\cite{dayan2018dostoevsky,mo2023learning,dayan2017monkey}. It is equivalent to maximizing throughput since it represents the number of operations a system completes in a period, which can be derived from the reciprocal of end-to-end latency. The I/O cost, which counts the I/Os triggered by an operation, is sometimes not general enough to indicate overall system performance, as it excludes the influence of devices and background jobs. We will discuss this further in the evaluation part.}

\revise{In selecting the tunable parameters, we mainly consider their high impact in performance~\cite{huynh2022endure,dayan2017monkey,dayan2018dostoevsky} and 
their prevalence in practical applications in LSM-based storage systems~\cite{rocksdb,google-leveldb}. Our primary aim is to explore the potential of using active learning to tune LSM-trees, and thus we select these relatively representative tunable parameters for a pioneering study. We acknowledge that incorporating the tuning parameters within a broader scope (e.g., sub-compaction, SST file size, background threads, cache policies) could further optimize the LSM-trees. However, adding more parameters also increases the complexity of training and prediction, which may not necessarily enhance overall performance. For example, while adjusting level-based compaction policies~\cite{mo2023learning} to assign varying numbers of runs ($K_i$) to each level could improve system performance, the expansion of the parameter space could lead to prohibitive sampling costs. Moreover, many parameters are not universally applicable across all LSM systems (e.g., sub-compaction is specific to RocksDB). Therefore, we limit our parameter tuning to showcase the idea of our proposed model. Nonetheless, we still provide an evaluation on a broader range of parameters to demonstrate how to extend our model in the evaluation section.}
\section{{{\ML}} Overview}
\label{sec:overview}


\begin{table}[tb!]
\footnotesize
\centering
\begin{center}
  \caption{Operation percentages in 15 training workloads.}
  \label{tab:train_workload}
  \begin{tabular}{l||lllllllllllllll} 
\textbf{No.} & \textbf{1} & \textbf{2} & \textbf{3} & \textbf{4} & \textbf{5} & \textbf{6} & \textbf{7} & \textbf{8} & \textbf{9} & \textbf{10} & \textbf{11} & \textbf{12} & \textbf{13} & \textbf{14} & \textbf{15} \\
 \hline \hline
   \textbf{$v$}: & 25 & 97 & 1 & 1 & 1 & 49 & 49 & 49 & 1 & 1 & 1 & 33 & 33 & 33 & 1\\
   $r$: & 25 & 1 & 97 & 1 & 1 & 49 & 1 & 1 & 49 & 49 & 1 & 33 & 33 & 1 & 33\\
  $q$: & 25 & 1 & 1 & 97 & 1 & 1 & 49 & 1 & 49 & 1 & 49 & 33 & 1 & 33 & 33\\
   $w$: & 25 & 1 & 1 & 1 & 97 & 1 & 1 & 49 & 1 & 49 & 49 & 1 & 33 & 33 & 33\\ \cmidrule(lr){3-6}\cmidrule(lr){7-12}\cmidrule(lr){13-16}
\multicolumn{2}{r}{} & \multicolumn{4}{c}{\raisebox{1.5ex}{\textbf{unimodal}}} 
 & \multicolumn{6}{c}{\raisebox{1.5ex}{\textbf{bimodal}}}
  & \multicolumn{4}{c}{\raisebox{1.5ex}{\textbf{trimodal}}}\\
\end{tabular}
\end{center}
\end{table}

{
\small 
\begin{algorithm} \DontPrintSemicolon
\KwIn{Workloads to be tuned $\mathcal{W}$} 
\KwOut{Tuned configurations $\mathcal{C}$}
Collect training samples $\mathcal{S}$ by decoupled active learning (see Section~\ref{sec:theory} and Algorithm~\ref{alg:decopuled}). \;
Train ML cost models with $\mathcal{S}$ (see Section~\ref{sec:ml}).\;
Get tuned configurations $\mathcal{C}$ from ML models.\;
\If{$\mathcal{W}$ is scaled-up}{
Get $\mathcal{C}$ with extrapolation strategy (see Section~\ref{sec:extrap}).\;
}
\If{$\mathcal{W}$ is dynamic}{
Get $\mathcal{C}$ with dynamic mode (see Section~\ref{sec:dynamic}).\;
}
\Return $\mathcal{C}$
\caption{\revise{{\ML} overview}}
\label{alg:overview}
\end{algorithm}
}

\revise{
This section introduces {\ML}, outlined in Algorithm~\ref{alg:overview}, where an ML model estimates LSM-tree costs for given workloads and identifies desired parameter settings for enhanced performance.} As shown in Figure~\ref{sec:overview}, in its training phase, {\ML} first decouples the sampling space and identifies the theoretical optimum using a complexity-based model. Following this, it integrates an ML model to facilitate an active learning cycle, where the model is continuously iterated to select subsequent samples. To address data growth, we introduce an extrapolation strategy that extends the \revise{desired parameters} to testing scenarios with larger data sizes without the need for retraining. Both of these methods are designed to reduce training costs. Additionally, to meet practical demands, we apply the extrapolation strategy to enhance {\ML} for dynamic environments, which also includes equipping LSM-trees with the ability to dynamically adapt to changing parameters.

{\ML} considers training with various workloads as shown in Table~\ref{tab:train_workload}, in line with settings in Endure~\cite{huynh2022endure}. For ease of discussion, let us focus on one workload $W$ with a sampling budget $h$, as extending to multiple workloads is straightforward -- simply training one workload after another. As shown in Figure~\ref{fig:frame}, the main workflow of {\ML} follows an active-learning approach, which consists of multiple rounds of sample generation and model training using existing samples.  
At a high level, {\ML} employs a novel technique called {\it decoupled active learning}, which enables the tuning of individual parameters in separate rounds. The parameter is fixed to its \revise{tuned} value found based on an intermediate 
model trained during active learning. This technique addresses a limitation commonly found in typical active learning, where all parameters are sampled together in a correlated manner, resulting in an extensive sampling space. Assuming the sampling space size for each parameter is $n_i$, the total sampling space required to explore all parameter correlations is $\prod n_i$. In contrast, decoupled sampling allows us to tune each parameter separately in a round, reducing the sample space to $\sum n_i$. Additionally, by using decoupled sampling, the model gains early exposure to \revise{desired} settings, leading to higher-quality samples in subsequent rounds.
\begin{figure*}[t!]
\centering
  \includegraphics[width=\linewidth]{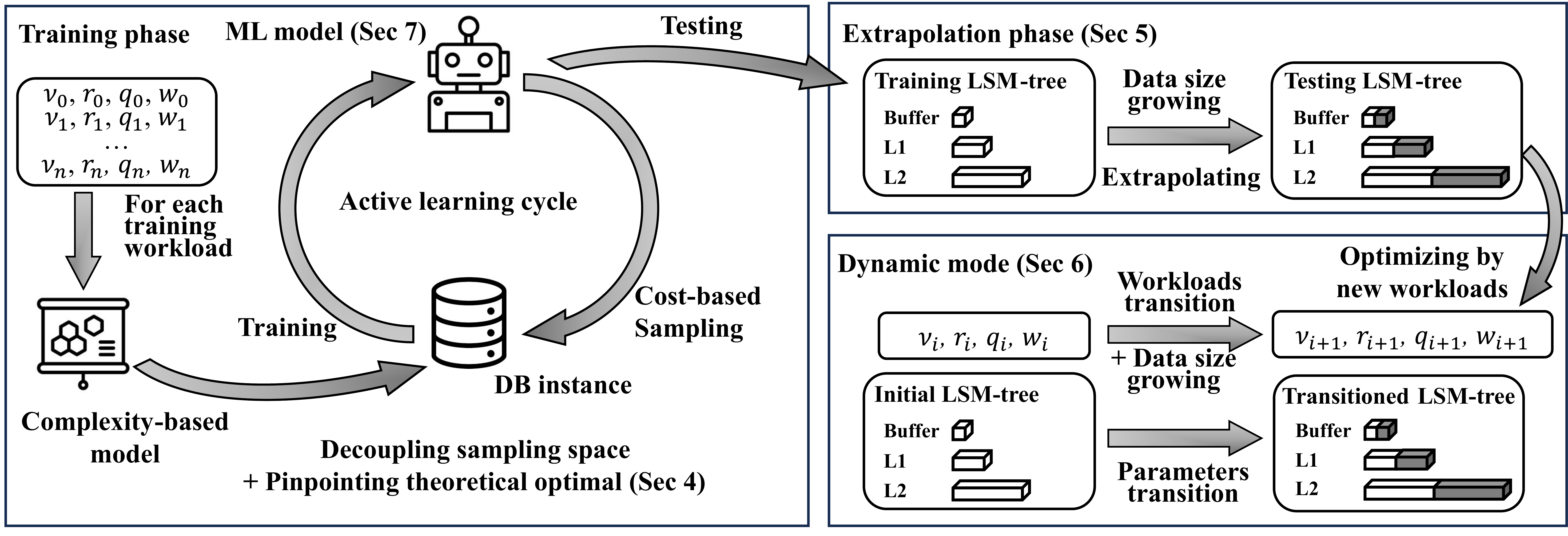}  
  \caption{
The overview of {\ML}: in its training phase, {\ML} first decouples the sampling space and identifies the theoretical optimum using a complexity-based model. Following this, it integrates an ML model to facilitate an active learning cycle, where the model is continuously iterated to select subsequent samples. To address data growth, we introduce an extrapolation strategy that extends the training optimals for larger data sizes without retraining. Both of these methods are designed to reduce training costs. Additionally, to meet practical demands, we apply the extrapolation strategy to enhance {\ML} for dynamic environments, which also includes equipping LSM-trees with the ability to dynamically adapt to changing parameters.}
\label{fig:frame}
\end{figure*}
One may wonder why the \revise{desired} setting for a parameter can be determined by an intermediate model in AL. Addressing this query leads us to another crucial technique in {\ML}, which involves employing a complexity-based cost model to assess the independence between \revise{desired} settings for each parameter, and is particularly geared to the LSM-tree applications. 
In a nutshell, each round in {\ML} comprises the following steps. Initially, a parameter or a set of parameters to be adjusted in the current round is chosen. Next, an analysis is performed on the \revise{selected} parameter configuration using a complexity-based cost model, and we narrow down the sampling range to the neighborhood of the \revise{selected} value.
Subsequently, the parameter is sampled within this narrowed range while keeping the previously selected parameters intact, and other parameters are set to their default values. Then, the selected parameters, denoted by $X$, are used to construct the LSM-tree instance and record the performance (denoted by $Y$) of running the workload $W$ to form the sample $(W, X, Y)$. Afterwards, a machine learning model (will be discussed in Section~\ref{sec:ml}) is trained using all existing samples. Finally, the trained model is used to tune the current parameter by selecting the one that leads to the lowest inference value for the model. 
The selected parameter value is then fixed for subsequent rounds to mitigate potential errors arising from complexity-based analysis. 

{
While decoupled active learning effectively reduces the sampling space for moderate data sizes, the training cost can exponentially increase with larger data sizes. 
To address this challenge, we design an extrapolation strategy that scales the \revise{selected} settings from a smaller to a larger dataset without retraining. Specifically, if $T'$, $M'_f$, and $M'_b$ are the \revise{selected} parameters for a database with $N'$ entries and an $M'$ memory budget, we prove that when the data size grows to $kN'$ and the memory budget to $kM'$, the \revise{desired} parameters become $T'$, $kM'_f$, and $kM'_b$. Though the factor $k$ may have practical limitations, this strategy still substantially reduces training costs, cutting down training time by approximately an order of magnitude. 
}

{
In practical applications, another challenge is managing dynamic workloads where the \revise{desired} settings for an initial workload may not remain effective for subsequent ones. To illustrate this scenario, we have a current workload represented by $(v_i,r_i,q_i,w_i)$ and an expected next workload $(v_{i+1},r_{i+1},q_{i+1},w_{i+1})$. In the interval between these two workloads, there are enough updates to trigger sufficient compactions, allowing for a change in parameters. Our method involves initially setting the \revise{desired} parameters $T'$, $M'_f$, and $M'_b$ for workload $i$. We then use the extrapolation strategy to estimate the \revise{desired} parameters $T^{\prime\prime}$, $M^{\prime\prime}_f$, and $M^{\prime\prime}_b$ for workload $i+1$. In the interval between these workloads, we incrementally adjust the size ratio during each compaction, and change the bit per key for Bloom filters when new runs are formed. This approach enables the settings to gradually shift, aligning with the evolving workloads and ensuring continued optimization.}

\section{Decoupled Active Learning}\label{sec:comp}
\label{sec:theory}
{
\small
\begin{algorithm}
\DontPrintSemicolon
\KwIn{Set of training workloads $\mathcal{W} = \{  {w_i} \} $ \;
  }
Initialize the set of training samples $\mathcal{S} \gets \{ \}$\;
\ForEach{ $w_i  \in \mathcal{W} $ }{
    Get theoretical optimal size ratio $T^*$ \;
    Get neighbour range $\mathcal{T} $ around $T^*$ \;
    \ForEach{ $T_i  \in \mathcal{T} $}{
        Execute database instance with $T_i$ \;
        $\mathcal{S} \gets \mathcal{S} \wedge (w_i,x_i,y_i)$ \;
        /* $x_i$ is LSM-tree parameters*/  \;
        /* $y_i$ is system latency */ \;
    }
    Train the ML model using $\mathcal{S} $  \;
    Get practical \revise{desired} $T'$ by the ML model  \;
    Get theoretical optimal buffer $M^*_b$ with fixed $T'$ \;
    Get neighbour range $\mathcal{M}$ around $M^*_b$ \;
    \ForEach{ $M_i  \in \mathcal{M} $}{
        Execute database instance with $M_i$ \;
        $\mathcal{S} \gets \mathcal{S} \wedge (w_i,x_i,y_i)$ \;
    }
}
\Return $\mathcal{S}$  \;
\caption{Decoupled active learning}
\label{alg:decopuled}
\end{algorithm}
}

Decoupled active learning aims to decouple each individual parameter from the complex I/O model to ease the sampling of a single parameter at each round. In particular, given a configuration space $\Lambda$ that consists of all tunable parameters, we select a parameter $\lambda_i\in \Lambda$ to analyze first, if there exists an optimal solution  $\lambda^*_i$ for $\lambda_i$, such that
\begin{align}
\begin{cases}
    \enspace\, \left. \frac{\partial f}{\partial \lambda_i}\right|_{\lambda_i=\lambda^*_i}=0\\
    \quad \lambda^*_i \perp {\Lambda \backslash \lambda_i}
    \end{cases}
\end{align}
where $f$ is the cost function, and $\perp$ denotes an orthogonal relationship. Essentially, the first equation implies that $\lambda^*_i$ is an optimal solution to $\lambda_i$ for the cost function $f(\cdot)$\footnote{The partial derivative at the optimal value is zero.}, whereas the second expression indicates that the optimal solution $\lambda^*_i$ does not depend on other tunable parameters in the configuration space. As such, we can safely determine the optimal setting for parameter $\lambda^*_i$ first (in {\ML}, we use machine learning to help calibrate the setting, see Section~\ref{sec:ml}) and then shrink the configuration space to $\Lambda^*=\Lambda \backslash \lambda^*$ by eliminating the dimension of $\lambda_i$ (which has been fixed as $\lambda^*_i$). The process then repeats in the newly reduced dimension $\Lambda^*$.

The process for setting each parameter involves several key steps. Initially, we formulate a cost function related to the parameter. Next, we proceed to solve the derivative of this function, obtaining a theoretical optimum. This theoretical optimum establishes an effective sampling range centered around the optimum to encompass the practical desired solution close to this range. The underlying theoretical basis for this approach is based on the following lemma:
\begin{lemma}
Given the prevalent cost model~\cite{dayan2018dostoevsky,dayan2017monkey} for leveling, the process of configuration optimization can be decoupled into two distinct stages: firstly, determining the optimal value of $T^*$, and secondly, allocating memory between $M_b$ and $M_f$. This decoupling ensures the attainment of a globally optimal configuration combination.
\end{lemma}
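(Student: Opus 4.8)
The plan is to make the decoupling claim precise by exhibiting the two orthogonality relations demanded by Equation~(4): first that the cost-minimizing size ratio $T^*$ does not depend on how the remaining memory is split between $M_b$ and $M_f$, and second that, once $T$ is fixed, the optimal memory split between $M_b$ and $M_f$ is itself a well-posed subproblem whose solution together with $T^*$ yields the global optimum. I would start by writing the total cost $f = v\cdot V + r\cdot R + q\cdot Q + w\cdot W$ explicitly for the leveling column of the cost table in Figure~\ref{fig:example}, substituting $L = \lceil \log_T(NE/M_b + 1)\rceil$ from Equation~\eqref{equ:level} and the Monkey bits-allocation so that the Bloom-filter term depends on $M_f$ only through $e^{-M_f/N}$. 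This exposes the structural fact that drives the whole lemma: $T$ enters $f$ only through the number of levels $L$ (in the $Q$ and $W$ terms, and in the tiering multipliers, though here we are in the leveling case), while $M_f$ enters only through the false-positive factor and $M_b$ enters only through $L$.

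Next I would carry out the two stages. For Stage 1, treat $M = M_b + M_f + M_c$ as fixed and argue that minimizing $f$ over $T$ (with $M_b$, $M_f$ held at any admissible values) gives a stationary condition $\partial f/\partial T = 0$ whose solution $T^*$ is a function of $N$, $E$, $M_b$, and the workload ratios only — crucially, $T^*$ is independent of the $M_b$ versus $M_f$ allocation in the sense required: the optimal $T$ depends on $M_b$ (since $L$ does), but given the level structure it does not depend on $M_f$ at all, because $M_f$ does not appear in $L$. Here I would lean on the standard Monkey/Dostoevsky analysis that the lookup cost is monotone in $L$ and the write cost scales with $L\cdot T/B$, so the tradeoff that fixes $T^*$ is purely between read levels and write amplification and does not involve the filter budget. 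For Stage 2, fix $T = T^*$; then $L$ is determined, the range-lookup and write terms are constant, and the only remaining freedom is splitting the leftover memory between $M_b$ (which can still shave off levels) and $M_f$ (which reduces point-lookup I/O), a convex-in-spirit one-dimensional allocation whose optimum I would characterize via its own derivative condition. Composing the two stages, I would then verify that the pair $(T^*, M_b^*, M_f^*)$ obtained sequentially coincides with the joint minimizer, appealing to the orthogonality established in Stage 1 — i.e., no feedback from the Stage-2 memory split back into the choice of $T^*$.

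The main obstacle I anticipate is the ceiling function in $L = \lceil \log_T(NE/M_b+1)\rceil$, which makes $f$ piecewise constant in $T$ and $M_b$ and thus non-differentiable, so the clean "set the partial derivative to zero" story has to be handled with care — either by relaxing $L$ to its continuous surrogate $\log_T(NE/M_b+1)$ (as Monkey and Dostoevsky do) and arguing the discrete optimum is within a bounded neighborhood of the continuous one, or by arguing directly on the piecewise structure that the minimizer over each constant-$L$ slab still decouples. A secondary subtlety is making the notion of orthogonality ($\lambda_i^* \perp \Lambda\setminus\lambda_i$) rigorous when $T^*$ genuinely does depend on $M_b$: the resolution is that the decoupling is asserted between $T$ and the \emph{memory-allocation} decision (the partition of a fixed $M$ into $M_b, M_f, M_c$), not between $T$ and $M_b$ as a free variable, so I would state the lemma's conclusion as a two-level optimization — outer over $T$, inner over the memory split — and show the outer problem's optimizer is insensitive to the inner split. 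I would also need to confirm the argument is consistent with the restriction $2 \le T \le T_{lim}$ so that the stationary point, when it falls outside the range, is replaced by the appropriate endpoint without breaking the decoupling.
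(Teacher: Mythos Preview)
Your overall plan mirrors the paper's two-stage derivative argument, but you have a genuine gap at the heart of Stage~1. You assert that ``the optimal $T$ depends on $M_b$ (since $L$ does)'' and then spend the rest of the proposal trying to work around this supposed dependence via a two-level outer/inner optimization. This is precisely the point where the paper's proof succeeds and yours would stall: when you actually differentiate $f_l$ with respect to $T$ using the continuous relaxation $L=\log_T(NE/M_b)$, the factor $L$ appears purely multiplicatively in the result,
\[
\frac{\partial f_l}{\partial T}=\frac{L}{BT\log T}\bigl(wT(\log T-1)-qB\bigr),
\]
so the stationary condition $wT^*(\log T^*-1)=qB$ involves only $w$, $q$, and $B$. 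Thus $T^*$ is independent of \emph{both} $M_f$ and $M_b$, not just $M_f$. That is the algebraic fact that makes the decoupling claim an actual theorem rather than a heuristic, and it is what Equation~(4)'s orthogonality condition $\lambda_i^*\perp\Lambda\setminus\lambda_i$ is meant to capture.

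Because you did not spot this factorization, your ``secondary subtlety'' paragraph invents a problem that does not exist and proposes a resolution (outer over $T$, inner over the memory split, with the outer optimizer ``insensitive'' to the inner) that you would not be able to verify: if $T^*$ genuinely depended on $M_b$, the two stages would feed back into each other and sequential optimization would not in general reach the joint minimizer. The fix is simply to carry out the derivative calculation explicitly rather than reasoning qualitatively about which terms contain $L$; once you see the $L$ prefactor, Stage~1 collapses to a one-variable equation in $T$ alone, and Stage~2 (the $M_f$ versus $M_b$ balance, with $M_b=M-M_f$) is then a clean one-dimensional problem exactly as you describe. Your remarks about the ceiling relaxation are reasonable and are indeed handled, as in Monkey/Dostoevsky, by working with the continuous surrogate.
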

\begin{proof}
According the terms in Figure~\ref{fig:example}, the I/O cost for a leveling LSM-tree can be represented as:
\begin{equation}
  \begin{aligned}
    f_l(T)&= ve^{-\frac{M_f}{N}}+r(e^{-\frac{M_f}{N}}+1)+ q(L+\frac{s}{B})+ w\frac{L \cdot T}{B} \nonumber\\
  \end{aligned}
\end{equation}
Its derivative can be formulated as:
\begin{equation}
    \label{equ:l_t_diff}
  \begin{aligned}
    \frac{\partial f_l}{\partial T}&= \frac{L}{BT\log(T)} \cdot (wT(\log{T}-1)-qB)\\
  \end{aligned}
\end{equation}
The theoretical optimal $T^*$ (the optimal value of $T$) can be obtained by setting Equation~\ref{equ:l_t_diff} to zero \revise{according to the Derivative Test}, giving us 
\begin{equation}
    \begin{aligned}
    wT^*(\log{T^*}-1)-qB=0\label{equ:l_t_diff_2}
    \end{aligned}
\end{equation}
Clearly, from Equation~\ref{equ:l_t_diff_2} we conclude that $T^*$ is independent of $M_f$ and $M_b$. So, we can determine $T^*$ first. 

When optimizing for $M_f$ and $M_b$, we have:
{
\begin{equation}
  \begin{aligned}
    \frac{\partial f_l(M_f)}{\partial M_f}&= -\frac{1}{N}(v+r)e^{-\frac{M_f}{N}}+\frac{qB+wT}{B\log{T}}\frac{1}{(M-M_f)}\\
  \end{aligned}
    \label{equ:l_filt_diff}
\end{equation}
}Equation~\ref{equ:l_filt_diff} implies that when $M_f$ increases, the absolute value of the first term will decrease, thereby diminishing its effect on optimizing point lookups (whose cost is $(v+r)e^{-{M_f}/{N}}+r$). Concurrently, a larger $M_f$ will result in a reduced $M_b$, indicating that the second term will increase, further causing the cost to escalate at an accelerating rate. To strike a balance between the two parameters, we set Equation~\ref{equ:l_filt_diff} equal to zero, which will allow us to derive the theoretically optimal values for $M_f$ and $M_b$. This completes the proof of the lemma.
\end{proof}

Upon segregating the parameters across various rounds, during the sampling process, we gather samples from a contiguous vicinity centered around the theoretical optimal values, based on the specified sample count. For $T$, the sampling interval is an integer. For $M_f$, the interval is measured in bits per key (BPK), determined by the formula $\frac{M_f^*}{N}$. 

{
In summary, decoupled active learning is outlined as Algorithm~\ref{alg:decopuled}. It stands out compared to plain active learning primarily for two reasons: firstly, it identifies theoretical optimal parameters, which are often close to the practical \revise{desired} parameters. Secondly, it reduces the dimensions of the sampling space, significantly decreasing the number of active learning cycles required to find the practical \revise{desired} solutions.}

{
\vspace{1mm}
\noindent
{\bf Extension to Tiered LSM-tree.} The decoupled sampling approach remains largely pertinent for the tiering compaction policy. First, we can differentiate with respect to $T$ using the tiering cost model. This differentiation closely resembles Equation~\ref{equ:l_t_diff}, augmented with a specific term representing the point read cost:
\begin{equation}
\label{equ:tier_t_diff}
\begin{aligned}
\frac{\partial f_t}{\partial T}&= (v+r)e^{-\frac{M_f}{N}}+q\frac{s}{B}+\frac{L(qBT(\log T-1)-w)}{BT\log{T}} \
\end{aligned}
\end{equation}
Based on prior research~\cite{huynh2022endure}, this term is quantitatively less significant compared to I/Os of range lookups and post-optimization writes. Therefore, in practice, the \revise{desired} $T^*$  exhibits only mild fluctuations with changes in $M_f$. Furthermore, employing machine learning for subsequent sampling effectively addresses these variations, as the sampling phase refines and corrects inaccuracies.
}

\section{Extrapolation without Retraining}
\label{sec:extrap}{
Employing ML models for estimating the cost of an LSM-tree introduces a related issue concerning their extrapolation capabilities. This means that when testing configurations deviate from those used during training, it becomes important to determine if the stale model can still effectively optimize the LSM-tree. For instance, if the stale model is trained on a database with $N=10^6$, it is unclear whether it can be applied to a database with $N=10^7$. While retraining the model is a possibility, we aim to identify an incrementally updated solution based on the existing model, which does not require a costly retraining. 
}

\begin{lemma}
Given the optimal $T'$ and $M'_f$ under the memory budget $M'$ and the number of entries $N'$, we have the new optimal $T''$ and $M''_f$ under the memory budget $M''=kM'$ and the number of entries $N''=kN'$ as:
\begin{equation}
  \begin{aligned}
    T''&=T',\quad\quad
    M''_f    &= kM'_f
  \end{aligned}
    \label{equ:extrap_lemma}
\end{equation}
\label{lemma:2}
\end{lemma}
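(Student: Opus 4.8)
The plan is to exploit a scale-invariance (homogeneity) property of the first-order optimality conditions already derived in the proof of the previous lemma, which decouples the problem into (i) choosing $T$ via Equation~\ref{equ:l_t_diff_2} and (ii) splitting the remaining memory via Equation~\ref{equ:l_filt_diff}. Throughout I would treat $L$ as the continuous function $L(T,N,M_b)=\log(NE/M_b+1)/\log T$ (dropping the ceiling, as is customary for this cost model) and assume the optimum lies in the interior of the feasible ranges, so that the first-order conditions characterize it.

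\emph{Step 1: the size ratio is unchanged.} From Equation~\ref{equ:l_t_diff}, $\partial f_l/\partial T=\frac{L}{BT\log T}\bigl(wT(\log T-1)-qB\bigr)$; since the prefactor is nonzero for $T>1$ and $L\ge 1$, any critical point satisfies $wT(\log T-1)-qB=0$, which involves only $w$, $q$, and $B$ — neither $N$ nor $M$. Hence the optimal size ratio is the same for $(N',M')$ and $(N'',M'')$, i.e.\ $T''=T'$. I would also record the consistency fact that, under $M_b''=kM_b'$, the level count is itself invariant, $N''E/M_b''=kN'E/(kM_b')=N'E/M_b'$, so $L''=L'$; this reassures us that the dropped ceiling introduces no discontinuity.

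\emph{Step 2: the Bloom-filter memory scales by $k$.} Fix $T=T'$ and write $M_b=M-M_f$ as in the previous lemma. Setting Equation~\ref{equ:l_filt_diff} to zero gives the optimality condition
\begin{equation}
\frac{v+r}{N}\,e^{-M_f/N}=\frac{qB+wT'}{B\log T'}\cdot\frac{1}{M-M_f}. \nonumber
\end{equation}
The key observation is that this equation is invariant under the simultaneous substitution $(N,M,M_f)\mapsto(kN,kM,kM_f)$: the ratio $M_f/N$ in the exponent is unchanged, the left-hand prefactor $1/N$ becomes $1/(kN)$, and the right-hand factor $1/(M-M_f)$ becomes $1/\bigl(k(M-M_f)\bigr)$, so both sides scale by exactly $1/k$. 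Therefore, if $M_f=M'_f$ solves the condition for $(N',M')$, then $M_f=kM'_f$ solves it for $(kN',kM')$, which is the claimed $M''_f=kM'_f$ (and correspondingly $M''_b=M''-M''_f=k(M'-M'_f)=kM'_b$).

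\emph{Step 3: this critical point is the optimum.} To upgrade "critical point" to "minimizer," I would note that $\partial f_l/\partial M_f$ is strictly increasing in $M_f$ on $(0,M)$ — the term $-\tfrac{v+r}{N}e^{-M_f/N}$ is increasing and $\tfrac{qB+wT'}{B\log T'}\cdot\tfrac{1}{M-M_f}$ is increasing — so it has at most one root, which separates a region of negative derivative from one of positive derivative and is thus the unique global minimizer; applied to the scaled problem this shows $kM'_f$ is genuinely the optimum, not merely a stationary point. The main obstacle I anticipate is the rigor surrounding $L$: justifying the continuous relaxation of the ceiling and confirming the interior-optimum assumption — in particular that $T'$ and $T''$ do not get clamped to the boundary $T_{lim}=NE/M_b^{lim}$, which itself depends on $N$. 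I would address this with a brief remark that the analysis is carried out in the regime where the optimum is interior, consistent with the prior cost models, and that any boundary effects are absorbed by the subsequent ML-based refinement.
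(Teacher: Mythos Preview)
Your argument is correct and follows the same core idea as the paper: verify that the first-order optimality conditions are invariant under the simultaneous scaling $(N,M,M_f)\mapsto(kN,kM,kM_f)$, with $T$ unchanged because its stationarity condition does not involve $N$ or $M$. The one substantive difference is the underlying cost model. You reuse the pure I/O cost $f_l$ from the preceding lemma, whereas the paper's proof introduces an enriched per-operation cost $g_l$ that separates read and write I/O constants $I_r,I_w$ and adds explicit CPU terms $C_r,C_q,C_w$; it then rederives the two stationarity equations for $g_l$ and checks the same scale-invariance there. Structurally nothing changes---the invariance still comes from the homogeneity of $M_f/N$, $1/N$, and $1/(M-M_f)$, together with $L$ being preserved---so your version is a clean specialization, while the paper's buys alignment with the end-to-end latency objective actually optimized by the system. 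Your Step~3, using monotonicity of $\partial f_l/\partial M_f$ to upgrade the stationary point to the unique minimizer, is a genuine addition: the paper only checks that the scaled parameters satisfy the critical-point equations and does not argue uniqueness.
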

\begin{proof}
{
According to~\cite{idreos2019design}, in leveling policy, the write amplification of each level is $T$, signaling that on average the key-value entry of the update will take part in $T$ compactions in each level. So the average complexity of CPU to compact an entry is expected to be $T\cdot L$. Further separating the I/O cost model in~Figure~\ref{fig:example} by writes and reads, on average the write query incurs a total overhead of $\frac{LT}{B} \cdot (I_w+I_r)+C_w\cdot T$, where $I_w$ and $I_r$ are the write and read I/O costs while $C_w$ is the CPU cost, such as merge sorting and space allocation. In the same way, let $C_r$ be the cost of probing the metadata of a sorted run in the main memory, then the total cost of an zero-result point lookup is expected to be $e^{-\frac{M_f}{N}} \cdot I_r+C_r\cdot L$, and the cost for a non-zero-result point lookup is $(e^{-\frac{M_f}{N}}+1) \cdot I_r+C_r\cdot L$. For range lookups, the main CPU overhead is also to retrieve the information in the metadata, which leads to a total overhead $(L+\frac{s}{B})\cdot I_r+C_q\cdot L$.
}

{
In summary, when the number of entries is $N'$ and the memory budget is $M'$, the total overhead per operation of leveling policy is
\begin{equation}
  \begin{aligned}
    g_l&=I_rve^{-\frac{M_f}{N'}}+I_rr(e^{-\frac{M_f}{N'}}+1) +2C_r  L\\
      &+ I_rq(L+\frac{s}{B}) +C_q L + (I_w+I_r)\frac{LT}{B}  +C_w TL
  \end{aligned}
    \label{equ:extrap_orig}
\end{equation}
If the overhead is minimized when $T=T'$ and $M_f=M'_f$, then we have the equations:
\begin{equation}   
\label{equ:extrap_opt}
  \begin{aligned}
   \log{\frac{N'E}{M'-M'_f}} \frac{2C_r+I_rq+C_q+T'(T'-\log{T'})(I_w+I_r+C_w)}{T'\log^2{T'}} =0 \\
   -\frac{I_r}{N'}(v+r)e^{-\frac{M'_f}{N'}}+\frac{2C_r+I_rq+C_q+(I_w+I_r)\frac{T'}{B}+C_wT}{\log{T}(M'-M'_f)}=0
  \end{aligned}
\end{equation}
When $N''=kN'$, to ensure that the equation still holds, we can make $T^{"}=T'$ and $M'_f=kM'_f$, then the new version of Equations~\ref{equ:extrap_opt} is transformed to be
\begin{equation}
  \begin{aligned}
   \log{\frac{kN'E}{kM'-kM'_f}} \frac{2C_r+I_rq+C_q+T'(T'-\log{T'})(I_w+I_r+C_w)}{T'\log^2{T'}} =0 \\
   -\frac{I_r}{kN'}(v+r)e^{-\frac{kM'_f}{kN'}}+\frac{2C_r+I_rq+C_q+(I_w+I_r)\frac{T'}{B}+C_wT}{\log{T}(kM'-kM'_f)}=0,\nonumber
  \end{aligned}
    \label{equ:extrap_opt2}
\end{equation}
which means $T^{\prime\prime}=T'$ and $M^{\prime\prime}_f=kM'_f$ are the new optimum of the new configuration.
} This completes the proof.
\end{proof}

Based on Lemma~\ref{lemma:2}, the extrapolation strategies of $N$ can be inducted as: First, we train the model with configuration of $N'$ and $M'$.
Then we get a stale optimal $T'$ and $M_f^{\prime}$ under given workloads. Then we get  $T''$ and $M_f''$ from Equation~\ref{equ:extrap_orig} as the new optimal when $N''=kN'$ and $M''=kM'$. 

\section{Dynamic System Mode}\label{sec:dynamic}
\begin{figure*}[t!]
\centering
  \includegraphics[width=\linewidth]{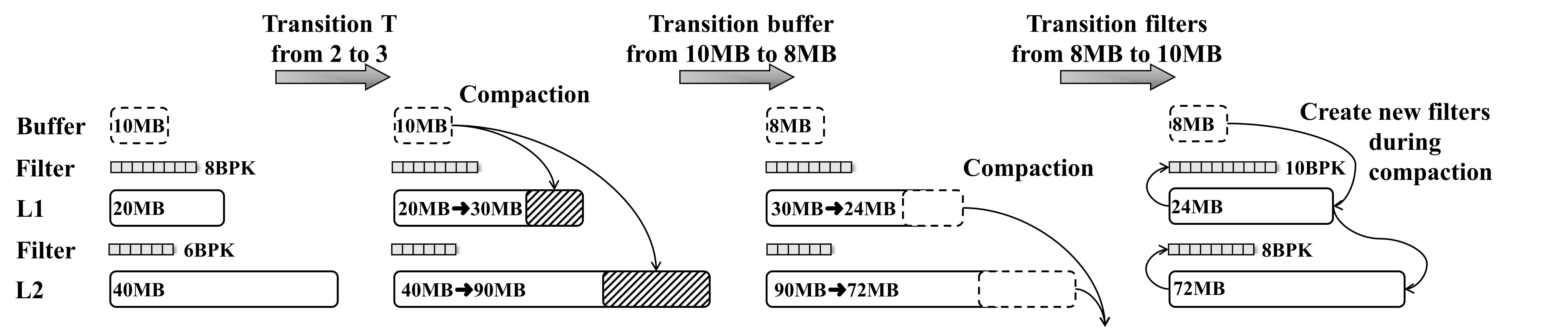}  
  \caption{Running example of the dynamic system mode. \revise{We employ a lazy transition strategy to keep the transition cost lower.}}
\label{fig:change}
\end{figure*}
{
Workloads can be dynamic. This section discusses how to adapt {\ML} to workload shifts. Assume that we have a current workload represented by $(v_i,r_i,q_i,w_i)$ and an expected next workload $(v_{i+1},r_{i+1},q_{i+1},w_{i+1})$. 
Our method initially sets the \revise{desired} parameters $T'$, $M'_f$, and $M'_b$ for workload $i$, and then uses the extrapolation strategy to estimate the \revise{desired}  parameters $T^{\prime\prime}$, $M^{\prime\prime}_f$, and $M^{\prime\prime}_b$ for workload $i+1$. In the interval between these workloads, we incrementally adjust the size ratio at each compaction, and change the bit per key for Bloom filters when creating new runs. This approach enables a gradual shift of the settings, aligning with the evolving workloads and ensuring continued optimization.}

To achieve the dynamicity, we propose a dynamic LSM-tree design, capable of \revise{lazy} dynamically adjusting the size ratio and memory allocation. Specifically, regarding the size ratio and memory buffer, their impact manifests chiefly through their ability to reshape each level's capacity. This changing process is guided by a common principle: 

\begin{itemize}[itemsep=5pt,leftmargin=*]
\item{Firstly, in cases where the target size for a level is less than its present size, we employ a strategy of compacting data from shallower levels into deeper ones to align with the desired level size. This is done during standard compactions. As a result, this often means that when compaction is triggered, some files from the current level are moved to the next level, also leading to more frequent compaction.
For example, Figure~\ref{fig:change} illustrates four states of the dynamic LSM-tree, reflecting the process of parameter changes during LSM-tree compactions in response to workload changes. Initially, the size ratio is 2, and the memory buffer is allocated 10MB. Level 1 (L1) contains 20MB and has a filter of 8 BPK, while Level 2 (L2) holds 40MB, also with a filter of 6 BPK. In the second state, we see an adjustment in the size ratio, with L1 expanded to 30MB and L2 to 90MB, suggesting compactions have occurred allowing for increased data volume per level. 
}

\item{Secondly, if the desired size for a level is greater than its current size, possibly due to an increase \revise{desired} size ratio or write buffer, the level is gradually expanded using data from shallower levels during regular compaction processes. This typically results in less frequent compaction since each level has the capacity to hold more files. In the third scenario depicted in Figure~\ref{fig:change}, the buffer size is reduced to 8MB, with L1 now at 24MB and L2 at 72MB. The remaining files are compacted into deeper levels, adapting the structure to better suit the current workload.  }
\end{itemize}
In practice, the actual size of each level may deviate from the ideal configuration dictated by the predefined size ratio and write buffer due to \revise{the lazy transition strategy}. However, when workloads evolve gradually, the expected optimal parameters transition smoothly as well. This incremental adaptation allows the actual parameters to closely align with the \revise{desired value} over time. It can be validated in our experiments that the dynamic LSM-tree is engineered to maintain better optimization performance than plain LSM-trees.
For altering the memory allocated to the bloom filter, we allocate the new bits-per-key calculated from $M'_f$ to each file as it is created. In the final stage of Figure~\ref{fig:change}, the adjustment lets the filter size be increased to 10MB. This happens when each new Bloom filter is created during compaction. For instance, the filter at L1 is enhanced to 10 BPK. This adjustment signifies a refinement of the filters, optimizing them for more efficient key lookups in the context of the changing data environment.

\revise{For automatic optimization in dynamic scenarios, we employ threshold-based detection methods. Specifically, to recognize workload changes, the system monitors the percentages of each operation within a {\it period} of $p$ operations, where $p$ is a hyper-parameter. Reconfiguration does not occur in every period, but only when the percentage for any operation type varies by a predefined threshold $\tau$ compared to its percentage at the time the last reconfiguration was triggered. The system then handles the reconfiguration according to the approach elaborated above. Additionally, we evaluate the sensitivity of $p$ and $\tau$ in Section~\ref{sec:eva}.}

\section{Embedded Machine Learning Models}
\label{sec:ml}

This section examines the three most prevalent ML models, evaluating their pros and cons for integration into {\ML}. It is nevertheless to note that other ML models can potentially be embedded into {\ML}, and we limit the discussion scope to simple and representative models for ease of discussions.

\vspace{1mm}
\noindent\textbf{Polynomial Regression (Poly)~\cite{hastie2009elements}} uses basis functions to capture nonlinear relationships between variables, replacing linear terms in linear regression. 
%
%
We treat each term from the cost models in Figure~\ref{fig:example} as basis functions, and include a constant term for each operation type to account for CPU time consumption. 
The regression model for the cost function can be formulated as:
\begin{equation}
\label{equ:poly_cost}
y_{cost}=\sum \beta_{i} x_i
\end{equation}
Here, $\beta_i$ represents the coefficients to be learned, and $x_i$ denotes the basis functions derived from theoretical models. 


\vspace{1mm}
\noindent\textbf{Tree Ensembles}
(Trees)~\cite{chen2015xgboost,marcus2019plan} are a type of ML model that combines multiple decision trees to enhance accuracy and robustness. 
The model can mitigate the impact of individual tree biases and errors, resulting in more accurate and stable predictions. 
In this paper, we primarily focus on gradient-boosted trees due to its widely adoption~\cite{wang2020we,dutt2019selectivity}.
Tree ensembles offer the advantage of bypassing manual basis function design and directly incorporating cost function elements. They detect feature relationships automatically and have strong fitting capability. However, they are prone to overfitting and outlier sensitivity due to uncertainty, and their extrapolation ability is weaker than polynomial regression.

During training and inference, we input the influential factors as independent features into the tree ensembles, including $N$, $T$, $M_b$, $M_c$, $v$, $r$, $w$, and $q$. The average latency is used as the label, and the ensemble trees automatically capture the relationships between these features.

\vspace{1mm}
\noindent\textbf{Neural Networks.}
We also explore more complex and advanced models, such as neural networks (NN), which consist of interconnected nodes organized into layers for data processing and analysis. However, these complex models typically demand a larger amount of training data~\cite{juba2019precision,hu2021model,ying2019overview}, resulting in increased sampling time. 
To verify the expectation, while considering previous experiences~\cite{marcus2019plan,sun13end} and conducting experimental model selection, we evaluate a standard NN model with four fully connected layers. This model requires three times the number of samples compared to the other two models in order to achieve similar optimization outcomes (see more details in Figure~\ref{fig:hours} in Section~\ref{sec:eva}).

\section{Evaluation}
\label{sec:eva}
\begin{figure*}
   \hspace{-8mm}
    \begin{subfigure}{0.71\textwidth}
        \centering
        \includegraphics[width=\textwidth]{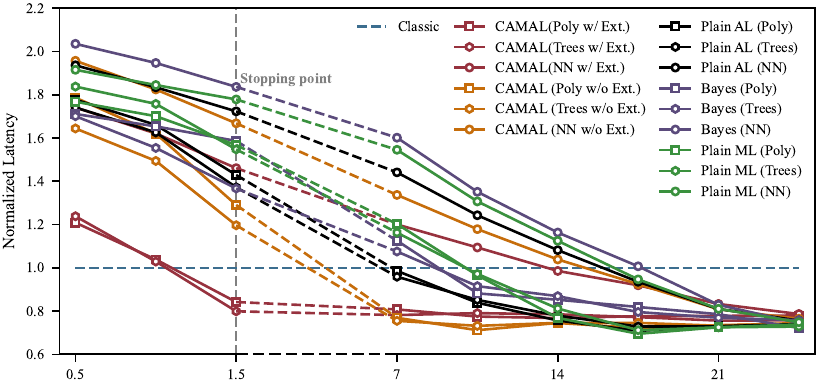}
            \vspace{-5mm}
        \caption{Sampling hours of different strategies}
        \label{fig:hours}
    \end{subfigure}
\raisebox{-1.5mm}{
    \begin{subfigure}{0.22\textwidth}
        \centering
\footnotesize
  \tabcolsep=0.9mm
  \renewcommand{\arraystretch}{0.8}
  \begin{tabular}{l||lll} 
  \textbf{Methods} & \textbf{Mean} & \textbf{90th} 
        \\ \hline \hline
        \ML (Poly) & 0.11 & 0.14\\
        \ML (Trees) & 0.10 & 0.14\\
        \ML (NN) & 0.19 & 0.22\\
        Plain AL (Poly) & 0.19 & 0.23\\
        Plain AL (Trees) & 0.18 & 0.24\\
        Plain AL (NN) & 0.22 & 0.29\\
        Bayes (Poly) & 0.22 & 0.26\\
        Bayes (Trees) & 0.19 & 0.25\\
        Bayes (NN) & 0.25 & 0.28\\
        Plain ML (Poly) & 0.18 & 0.24\\
        Plain ML (Trees) & 0.19 & 0.25\\
        Plain ML (NN) & 0.24 & 0.29\\
        Classic & 0.13 & 0.18\\
        Classic (Cache) & 0.12 & 0.17\\
        Monkey & 0.14 & 0.20\\
        \end{tabular}
        \caption{Latency/Op (ms) }
        \label{tab:overall_lat}
        \vspace{0.1cm} 
    \end{subfigure}}

\hspace{-2mm}
\raisebox{0mm}{
    \begin{subfigure}{0.2\textwidth}
        \centering
\footnotesize
  \tabcolsep=0.9mm
    \renewcommand{\arraystretch}{0.8}
  \begin{tabular}{l||lll} 
  \textbf{Methods} & \textbf{Mean}  
        \\ \hline \hline
{\ML} (Poly) & 6.2  \\ 
{\ML} (Trees) & 4.5 \\ 
{\ML} (NN) & 32.8 \\ 
Plain AL (Poly) & 15.2 \\ 
Plain Al (Trees) & 15.5 \\ 
Plain AL (NN) & 34.7 \\ 
Bayes (Poly) & 16.0 \\ 
Bayes (Trees) & 14.4 \\ 
Bayes (NN) & 34.2 \\ 
Plain ML (Poly) & 21.3 \\ 
Plain ML (Trees) & 18.6 \\ 
Plain ML (NN) & 34.9 \\   
Classic & 16.2 \\ 
Classic (Cache) & 13.4 \\ 
Monkey & 24.3
        \end{tabular}
        \caption{I/Os per operation}
        \label{tab:overall_io}
    \end{subfigure}}
    \hspace{4mm}
     \begin{subfigure}{0.75\textwidth}
        \centering
            \includegraphics[width=\textwidth]{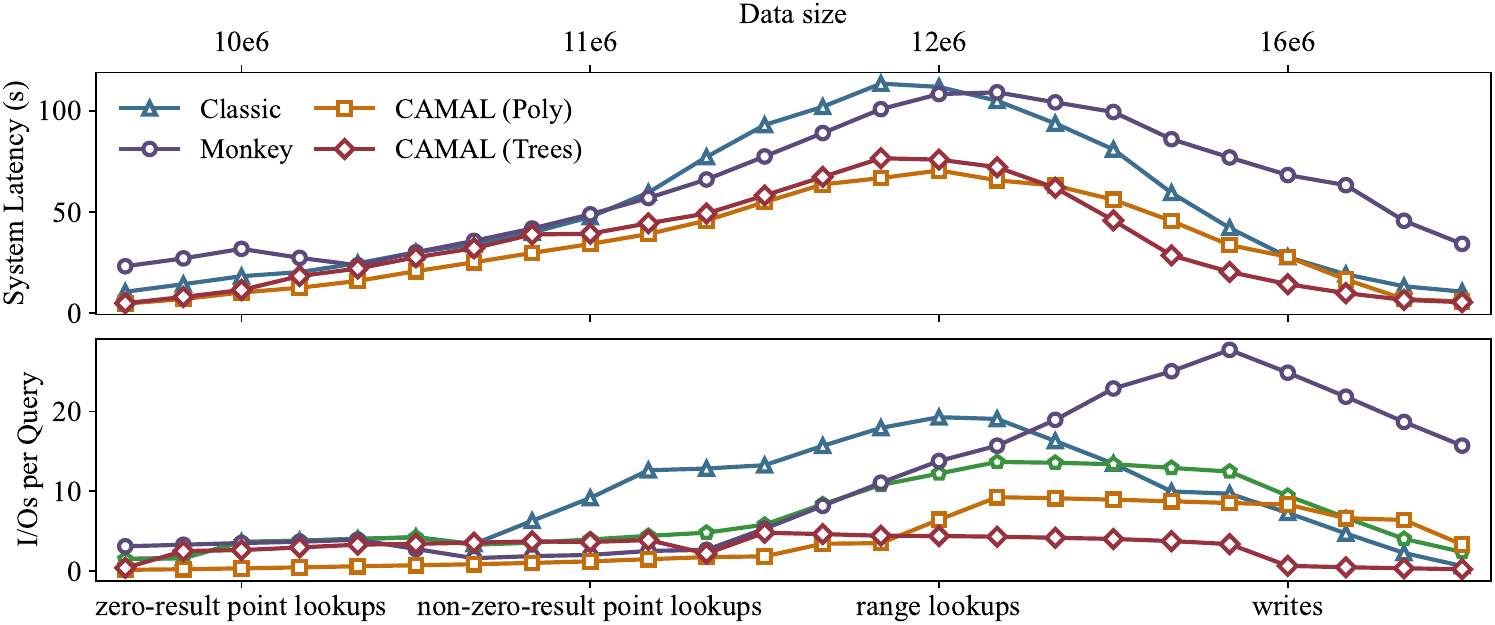}
            \vspace{-1mm} 
        \caption{Latency and I/O comparison on dynamic 
 test workloads}
        \label{fig:uniform_shift}
    \end{subfigure}
    \caption{
Classic methods relying solely on theoretical cost models necessitate no training samples but demonstrate limited optimization. Conversely, other ML or AL methods that employ plain sampling strategies achieve high performance but entail relatively extensive sampling costs. {\ML} is the only complexity-analysis driven ML-aided framework that achieves high performance while significantly reducing sampling costs.}
    \label{fig:overall}
\end{figure*}

We integrated {\ML} into RocksDB and conduct systematic experimental evaluations to demonstrate its effectiveness. 
\vspace{-3mm}

\subsection{Experimental Setup}
\noindent\textbf{Hardware.} \revise{Our experiments are done on a server with Intel(R) Core(TM) i9-13900K processors, 128GB DDR5 main memory,
2TB NVMe SSD, a default page size of 4 KB, and running 64-bit Ubuntu 20.04.4 LTS on an ext4 partition.}


\vspace{1mm}
{
\noindent\textbf{Database Setup.} We employ RocksDB~\cite{rocksdb}, a widely-used LSM tree-based storage system. Consistent with Endure~\cite{huynh2022endure}, we evaluate the steady-state performance of databases by initializing each instance with 10 million unique key-value pairs, where each pair is 1 KB in size, unless otherwise specified. Every key-value entry includes a randomly selected 16-bit key, and the rest of the bits are filled with randomly generated values. 
}

We allocate a varying number of bits per element for the Bloom filters at each level based on Monkey~\cite{dayan2017monkey}. The total memory budget for this is set by default to 16MB. In our default experimental setup, we focus on adjusting only the size ratio and the memory allocated to the write buffer and Bloom filter as tunable parameters. Additionally, the leveling compaction policy is the only one we consider in this default setting, unless explicitly stated otherwise. This implies that other compaction policies and parameter adjustments are explored only in specific, later experiments. \revise{Following Endure~\cite{huynh2022endure}, to obtain an accurate count of block accesses we enable direct I/Os for both queries and compaction and disable the compression.}

{
}

\vspace{1mm}
{
\noindent\textbf{Workloads.} Our evaluation is divided into static and dynamic modes. The queries are generated based on both the 15 standard workloads detailed in Table~\ref{tab:train_workload} and the 24 shifting workloads outlined in Table~\ref{tab:test_workload}. The shifting workloads follow the principle of progressively transitioning weights among different operations, enabling {\ML} to gradually adapt the parameters and include a broad spectrum of application scenarios. In the static mode, we assess both standard and shifting workloads, whereas in the dynamic mode, we focus solely on the shifting workloads. By default, our evaluations are conducted in static mode, unless otherwise specified. In the dynamic mode, to ensure adequate compactions for dynamic LSM's gradual parameter adjustment, we continuously insert additional entries during the intervals of workload queries.
}

We assess the latency and I/O times of a series of 500,000 queries, reporting the average performance across all workloads, unless specified differently. All range queries are set up with minimal selectivity, functioning as short-range queries that generally access between zero to two blocks per level on average.

{
Regarding data distribution, we employ both uniform and Zipfian query distributions, utilizing YCSB~\cite{cooper2010benchmarking} for implementation. We also modify the skew coefficient of the Zipfian distribution, adjusting it within a range from 0 to 0.99. This distribution is applied to both the data being ingested and the queries.}

%
%
\begin{table}[tb!]
\footnotesize
\centering
\begin{center}
  \caption{Operation percentages in 24 test workloads.}  
  \label{tab:test_workload}
   \renewcommand{\arraystretch}{0.9}
  \tabcolsep=1mm
  \begin{tabular}{l||llllllllllllllllllllllll} 
&\multicolumn{6}{>{\centering\arraybackslash}m{2.2cm}}{\textbf{zero-result point lookups}}&\multicolumn{6}{{>{\centering\arraybackslash}m{2.2cm}}}{\textbf{non-zero-result point lookups}}&\multicolumn{6}{c}{\textbf{range lookups}}&\multicolumn{6}{c}{\textbf{writes}}
   \\ \hline \hline
   $v$& 60&75& 91 & 75&60&45&30&15&3&5&5&5&5&5&3&5&5&5&5&5&3&15&30&45\rule{0pt}{0.3cm}\\
   $r$& 5&5& 3 & 15&30&45&60&75&91&75&60&45&30&15&3&5&5&5&5&5&3&5&5&5\rule{0pt}{0.3cm}\\
   $q$& 5&5& 3 & 5&5&5&5&5&3&15&30&45&60&75&91&75&60&45&30&15&3&5&5&5\rule{0pt}{0.3cm}\\
   $w$& 30&15& 3 & 5&5&5&5&5&3&5&5&5&5&5&3&15&30&45&60&75&91&75&60&45\rule{0pt}{0.3cm}\\
\end{tabular}
\end{center}
\end{table}

\vspace{1mm}
\noindent\textbf{{\ML} Setup.}
In our evaluation, the distinction between using (w/ Ext.) or not using (w/o Ext.) the extrapolation strategy is made only in the first experiment. In all other experiments, we consistently apply the extrapolation strategy. We set the factor $k$ in the extrapolation strategy to 10, meaning we train with only 1/10th of the testing data size and memory budget. This choice is justified as this scaling factor strikes a balance between minimizing sampling costs and maximizing optimization performance, a concept further evidenced in subsequent experiments. Thus, unless otherwise specified, {\ML} is trained with the scaled-down setting and extrapolation strategy and tested under full settings.

As for the ML models, we implement polynomial regression using the least square method~\cite{harris2020array} ({\ML}~(Poly)), tree ensembles with XGBoost~\cite{chen2015xgboost} ({\ML}~(Trees)) and neural network with PyTorch~\cite{paszke2019pytorch} ({\ML}~(NN)). Once 
training samples are ready, all models can be trained in 5 seconds and traversed in 200 milliseconds to search the \revise{desired}  parameters, with a space overhead of under 200KB. This is negligible compared to the time saved through optimization and the total memory budget. 

\vspace{1mm}
\noindent\textbf{Implementation optimizations.} We offer three applications with {\ML} to better integrate our system into a real key-value store for practical purposes. First, we incorporate block cache memory allocation as an input to the ML models to optimize cache strategy. Second, we consider data distribution beyond a default uniform setting - mainly Zipfian distribution~\cite{cooper2010benchmarking,lu2017wisckey,dai2020wisckey,chatterjee2021cosine}. Here we discuss three choices for incorporating different levels of distribution knowledge: (a) When the data distribution is unknown during runtime, we simply train the models using uniform data and test them with arbitrary data. (b) If the coefficient that represents the data distribution during runtime can be determined, we train the model using the same distribution. (c) When multiple potential coefficients exist for a data distribution during runtime, we integrate the coefficient as an input feature within the ML model. Third, we address workload uncertainty, which is the inconsistency between observed and expected operation proportions, a challenge initially addressed by Endure~\cite{huynh2022endure}. We adopt the hyperparameter $\rho$, representing the subjective uncertainty region anticipated in testing workloads defined by KL-divergence distance~\cite{huynh2022endure}. Our solution is statistically based -- while maintaining the training process, during testing, we randomly sample several workloads within a region of size $\rho$ and identify settings with the lowest average latency across these workloads as the \revise{desired} solution.

\vspace{1mm}
\noindent\textbf{Baselines.}  
We compare the performance of {\ML} against five LSM-tree tuning techniques: (1) well-tuned RocksDB with Monkey~\cite{dayan2017monkey}, (2) classic tuning implemented in Endure~\cite{huynh2022endure}, (3) plain ML methods that ignore sampling improvement techniques, (4) plain active learning without considering theoretical-driven decoupled sampling, and (5) Bayesian Optimization as sampling, which has been adopted in tuning relational databases~\cite{van2017automatic,duan2009tuning}. The first two techniques represent industry and research practices, while the latter three serve as baselines to show the improvement of {\ML} over classic ML sampling techniques. Our evaluation of well-tuned RocksDB incorporates the memory allocation strategy for the bloom filter from Monkey~\cite{dayan2017monkey} and minor fixes from Spooky~\cite{dayan2022spooky}. We also include a set of reasonable parameters based on experience and experimental settings, such as using leveling compaction with a size ratio of 10, $M_f$ set to $10N$ bits (i.e., 10 bits per key), and the remaining budget allocated to $M_b$. For classic tuning, we apply the nominal tuning approach used in Endure~\cite{huynh2022endure}, which minimizes the classic I/O cost model using the Sequential Least Squares Programming optimizer~\cite{virtanen2020scipy}. For plain ML methods, we employ grid search to partition the sampling space according to the sampling budget. As for plain active learning, we choose a set of samples per round similar to {\ML}. 
We also employ a prevalent implementation of Bayesian Optimization with Gaussian processes~\cite{bo2014}. 


\begin{figure*}[t]
  \centering
  \raisebox{4mm}{
  \hspace{-2mm}
  \begin{subfigure}[b]{0.23\textwidth}
  \tabcolsep=1.1mm
    \renewcommand{\arraystretch}{1}
    \tiny
     \begin{tabular}{p{1cm}||lll} 
      \textbf{$N$} & $1e6$ & $2e6$ &$1e7$   
            \\ \hline \hline
         \textbf{Classic}& 1& 1  & 1\\ 
     \textbf{Poly}& 0.84 & 0.82  & 0.83\\ 
     \textbf{Trees} & 0.82 & 0.81 & 0.82\\ 
    \end{tabular}

    \vspace{1mm}
        \renewcommand{\arraystretch}{1}
     \begin{tabular}{p{1cm}||lll} 
      \textbf{$M  (MB)$} & $16$ & $32$ &$64$   
            \\ \hline \hline
    \textbf{Classic}& 1& 1  & 1\\ 
     \textbf{Poly}& 0.84 & 0.82  & 0.86\\ 
     \textbf{Trees} & 0.82 & 0.83 & 0.84\\ 
    \end{tabular}
       \captionsetup{font=footnotesize,skip=4pt}
       \vspace{2mm}
            \caption{\footnotesize Normalized Latency}
        \label{tab:eff}
  \end{subfigure}
  }
  \raisebox{4mm}{
  \begin{subfigure}[b]{0.24\textwidth}
            \includegraphics[width=\textwidth]{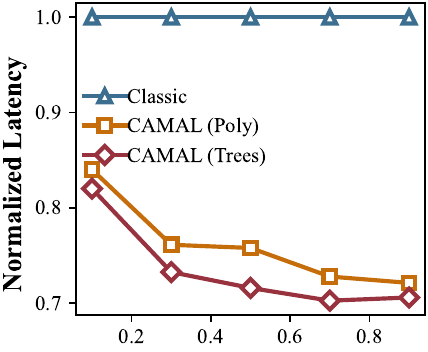}
   \captionsetup{font=footnotesize,skip=0pt,margin={3pt,0pt}}
    \caption{\footnotesize Skew Coefficient}
    \label{fig:comp_zipf}
  \end{subfigure}
    \hspace{0.5mm}
  \begin{subfigure}[b]{0.24\textwidth}
    \includegraphics[width=\textwidth]{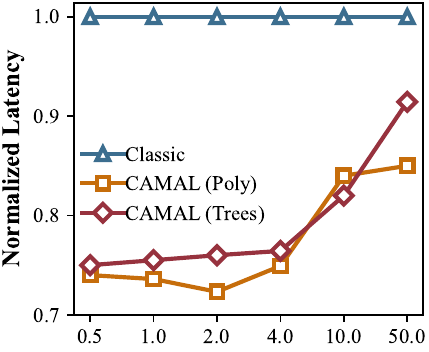}
\captionsetup{font={scriptsize},skip=2pt,margin={3pt,0pt}}
\caption{\footnotesize $k$ in extrapolation}
    \label{fig:comp_extrap_n}
  \end{subfigure}
      \hspace{0.5mm}
  \begin{subfigure}[b]{0.24\textwidth}
    \includegraphics[width=\textwidth]{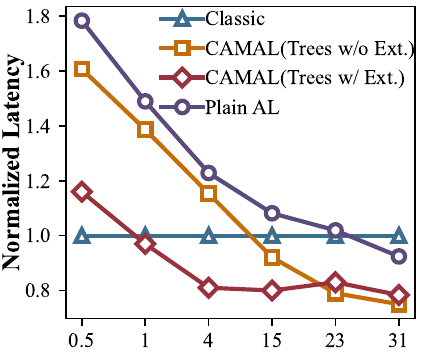}
\captionsetup{font={scriptsize},skip=-1pt,margin={-5pt,-6pt}}
    \caption{\footnotesize Large data sampling hours}
    \label{fig:larger}
  \end{subfigure}}

\vspace{-2mm}
  \raisebox{2mm}{
    \hspace{-3mm}
  \begin{subfigure}[b]{0.24\textwidth}
    \includegraphics[width=\textwidth]{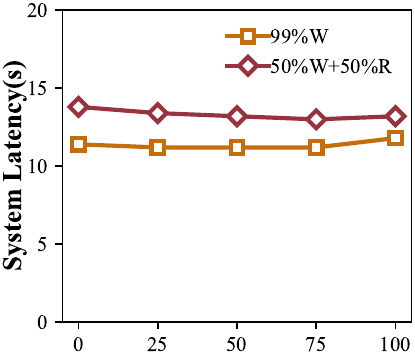}
   \captionsetup{font=footnotesize,skip=4pt,margin={3pt,0pt}}
    \caption{\footnotesize \% deletes in workload}
    \label{fig:del}
  \end{subfigure}
      \hspace{1mm}
\begin{subfigure}[b]{0.24\textwidth}
    \includegraphics[width=\textwidth]{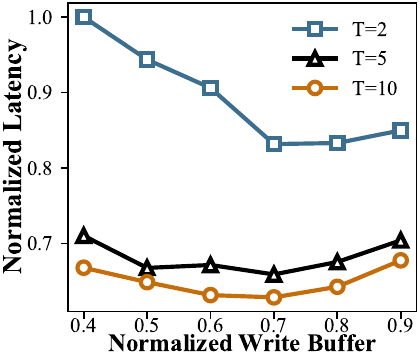}
   \captionsetup{font=footnotesize,skip=-1pt,margin={-2pt,0pt}}
  \caption{\footnotesize Independence example}
  \label{fig:inde}
  \end{subfigure}
        \hspace{0.5mm}
 \begin{subfigure}[b]{0.24\linewidth}
    \includegraphics[width=\textwidth]{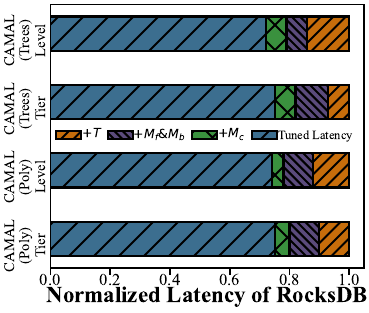}
   \captionsetup{font=footnotesize,skip=-1pt,margin={-3pt,-5pt}}
    \caption{\footnotesize Parameters Breakdown}
    \label{fig:split_para}
  \end{subfigure}
          \hspace{0.8mm}
  \begin{subfigure}[b]{0.24\linewidth}
    \includegraphics[width=\textwidth]{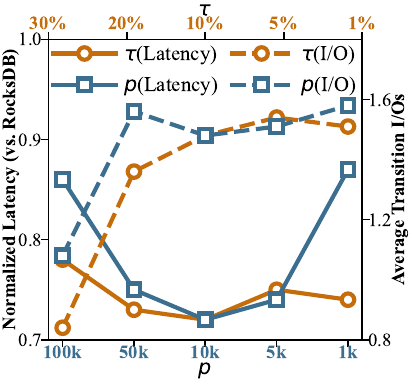}
   \captionsetup{font=footnotesize,skip=-1pt,margin={-2pt,0pt}}
    \caption{\footnotesize Sensitivity of $\tau$ and $p$}
    \label{fig:detect}
  \end{subfigure}}
  
  \caption{\revise{ The robustness of {\ML}.}}
  \label{fig:comp}
\end{figure*}

\subsection{Overall System Performance}
\label{subsec:system}

{
\noindent\textbf{Decoupled active learning in {\ML} significantly reduces sampling costs. } As illustrated in Figure~\ref{fig:hours}, even without the extrapolation strategy, our sampling process using {\ML} with polynomial regression and tree ensembles can be finished in just 7 hours. In contrast, alternative methods typically require over 20 hours to reach a similar level of performance. The faster identification of the \revise{desired} solution by {\ML} is attributed to its unique combination of complexity-analysis driven sampling and ML-aided sampling. Plain ML methods often fail to locate a suitable solution within a limited sampling budget, as their grid search approach divides the sampling space uniformly. Bayesian optimization, while more efficient than grid search, still falls short in some aspects. It better approximates \revise{desired} settings by using prior information within the current workload, but it requires additional iterations for exploration. This is because it tends to explore each workload in an instance-optimized LSM-tree independently, without utilizing information from other workloads. Additionally, its random initialization for each training workload leads to more exploration iterations to find the \revise{desired} solution. Plain active learning, despite its ability to leverage information across different workloads, is only marginally more effective than Bayesian optimization due to its random initialization approach. This limitation prevents it from outperforming the more integrated and efficient approach offered by {\ML} with decoupled active learning.
}

\vspace{1mm}
\noindent\textbf
{The extrapolation strategy in {\ML} can greatly reduce sampling costs when used with appropriate settings.} Figure~\ref{fig:hours} demonstrates that, with extrapolation, {\ML} can save around 80\% of sampling time (1.5 hour vs. 7 hours). \revise{The time includes generating three samples for each group of parameters in each training workload, which means the average time to generate a sample is about 40 seconds. Note that the model training time is negligible compared to the sampling time.} However, there are important considerations for achieving this efficiency: (1) Scaling Factor Range: As Figure~\ref{fig:comp_extrap_n} indicates, while optimization performance remains relatively stable up to a scaling factor of 10, it drops significantly when the factor exceeds about 50. This decline is due to two main reasons: firstly, as the training database size diminishes, the uncertainty in system latency becomes more pronounced. Secondly, other hardware factors, such as garbage collection (as described in Spooky~\cite{dayan2022spooky}), also affect system latency. (2) Performance Gap with Full-Size Training: When using extrapolation, {\ML} incurs around 5\% more system latency compared to training with a full-size database instance. Adding more scaled-down training data does not bridge this gap. This discrepancy is attributed to both the increased uncertainty and the aforementioned hardware factors. In conclusion, while the extrapolation strategy can significantly optimize performance, its effectiveness is subject to other engineering factors that must be carefully managed.

\vspace{1mm}
\noindent\textbf
{The choice of embedded ML models significantly influences {\ML}'s performance.} As shown in Figure~\ref{fig:uniform_shift}, when {\ML} incorporates polynomial regression and tree ensembles, we observe comparable optimization levels after 7 hours of sampling. However, using a neural network as the base model leads to subpar performance at the same stopping point. Although employing extrapolation reduces sampling costs for neural networks, their performance still lags significantly behind the other two ML models. Similar patterns are seen when neural networks are combined with other sampling methods like Bayesian optimization and grid search. This trend could be attributed to the fact that deep neural networks typically require more extensive training data to generalize effectively and avoid overfitting~\cite{ying2019overview,hu2021model}. This necessity for extensive data underpins our preference for polynomial regression and tree ensembles as primary ML models in {\ML}. As discussed in Section~\ref{sec:ml}, polynomial regression and tree ensembles have a moderate complexity level compared to neural networks, making them more suitable for scenarios with fewer training samples.




\vspace{1mm}
\noindent\textbf{{\ML} consistently achieves better end-to-end performance than traditional tuning techniques.} Figure~\ref{fig:hours} shows that {\ML} with polynomial regression and tree ensembles consistently achieves lower latency (per operation) compared to classic tuning methods across various workloads, with an average improvement of 16\% $\sim$ 18\% (0.10 $\sim$ 0.11 ms vs. 0.13 ms). This can be attributed to {\ML}'s ability to capture the true relationship between parameters and the actual system conditions. In contrast, classic methods may provide biased estimates as they are purely based on I/O cost and neglect CPU time. As a result, it may still lead to high I/O cost, as shown in the range lookups-heavy workloads in Figure~\ref{fig:uniform_shift}. Further investigation reveals that theoretical models do not fit well with the {\it seek} and {\it next} operations in RocksDB, leading to unexpected performance. Monkey offers a universal solution for all workloads, leading to less variance but higher latency than classic tuning methods. Compared to Monkey, {\ML} significantly reduces latency by up to 8x (e.g., for some write-heavy workloads in Figure~\ref{fig:uniform_shift}), thanks to its ability to adapt to the unique operation ratios in each workload.

Although {\ML}'s learning objective does not specifically target I/O cost, it generally leads to a significant reduction in I/O cost compared to other methods in most cases. When compared to classic tuning, {\ML} reduces I/O cost by up to 72\% (4.5 vs. 16.2). This happens because classic tuning, despite aiming to minimize I/O cost, may not always achieve lower I/O cost due to inaccurate models. In particular, optimizing the theoretical complexity-based cost model results in sub-optimal solutions for system I/Os, such as the range lookup-heavy and write-heavy workloads in Figure~\ref{fig:uniform_shift}. The rationale behind this is that the real database does not perform exactly as predicted by the theoretical model. Monkey provides a default setting that is relatively stable for all workloads, resulting in higher I/Os compared to other methods, especially in write workloads. Since {\ML} has lower end-to-end latency, it implies that the sum of the I/O time and the CPU time is lower, which typically implies fewer I/Os. Our insight is that low latency is a sufficient condition for a low I/O cost, but low I/O cost does not necessarily implies a low latency due to potentially higher CPU costs. This explains the rationale for using latency as a learning objective.

\vspace{1mm}
{\noindent\textbf{{\ML} exhibits superior performance on the dynamic LSM-tree.} The trends illustrated in Figure~\ref{fig:uniform_shift} are derived from the dynamic LSM-tree under progressively changing workloads, where {\ML} distinctly outperforms competing methods. \revise{To further clarify the setup in Figure~\ref{fig:uniform_shift}, each point of x-axis represents a workload in Table~\ref{tab:test_workload}, in which the ratio of operations shifts gradually. Internal the workloads we adopt the detection methods and reconfigure the dynamic LSM-tree which has been elaborated in Section~\ref{sec:dynamic}. As the workloads include writes, the data size will gradually increase. So we also employ the extrapolation strategy elaborated in Section~\ref{sec:extrap}. We report the system latency and average I/Os of each workload.} The trajectory of {\ML}  demonstrates that optimization performance progressively adapts as workloads evolve. This points to the dynamic LSM-tree’s ability to adjust parameters, thereby surpassing the performance of a statically configured LSM-tree. The primary reason for this is that while Monkey and RocksDB can deliver a generally effective configuration for a range of workloads—maintaining stable performance—they lack the capability to fine-tune settings for specific workload scenarios. In contrast, the dynamic LSM-tree, guided by ML, can incrementally modify parameters to align closely with the theoretically optimal configuration. 

  \subsection{Parameter Sensitivity Study}
\noindent\textbf{The impact of each tunable parameter.}
\revise{As shown in Figure~\ref{fig:split_para}, we break down the impact of each parameter on performance compared with well-tuned RocksDB. Initially, we configure $M_f$ as $10N$, set $M_b$ to $M-M_f$, and $M_c$ to 0, while focusing on tuning $T$. The label " +$T$" indicates tuning only $T$; " +$M_f \& M_b$" involves balancing between $M_f$ and $M_b$; and " +$M_c$" additionally includes tuning $M_c$. The results reveal that adjusting $T$ leads to significant performance improvements (reducing normalized latency from 1 to 0.88/0.86 in two models with leveling), while tuning $M_c$ also provides substantial benefits. This suggests that our feature optimization approach in {\ML} is in a reasonable order of importance.}


\revise{When further examining the optimization effects based on the compaction policy, as shown in Figure~\ref{fig:split_para}, we find that both leveling and tiering compaction policies achieve comparable effectiveness after tuning with {\ML}. However, combining these policies can result in better optimization. This finding highlights the potential of selecting an appropriate compaction policy to maximize optimization outcomes. Therefore, in Section~\ref{subsec:case}, we further discuss how to extend our model to incorporate the number of runs in an LSM-tree level as another tunable parameter for system optimization. }

\revise{We also experimentally verify the parameters' independence. We vary $T$ under specified workloads and then explore the sampling space for memory allocation to the write buffer and Bloom filter. As illustrated in Figure~\ref{fig:inde}, with varying size ratios, the tuned latency consistently approaches the practical desired when the write buffer occupies 60\% to 70\% of the memory budget. This supports the approach of tuning $T$ first, followed by memory allocation.  }

\vspace{1mm}
\noindent\revise{{\bf The impact of data scales.} To demonstrate the scalability of {\ML}, we explore settings with larger data volumes and increased memory budgets. Specifically, we expand the number of entries to 50 million and the memory budget to 80MB proportionally. As illustrated in Figure~\ref{fig:larger}, {\ML} maintains approximately the same optimization performance as seen with the 10MB setting, both at 4 and 23 hours, with and without extrapolation. In contrast, plain active learning (AL) still underperforms relative to {\ML}, achieving only a 5\% reduction in latency over 31 sampling hours. This change in sampling hours is primarily due to the increase in the sampling cost per sample compared to the original settings.}

\vspace{1mm}
\noindent\revise{\textbf{The impact of delete workload.} To investigate the generalizability of {\ML} to a broader range of operations, we evaluate its performance in static scenarios with not only updates but also deletes. We vary the ratio of deletes within the writes across two workloads. As illustrated in Figure~\ref{fig:del}, the tuned system latency remains nearly unchanged in both scenarios. This is because, unless specifically targeted by optimization strategies, deletes are handled in much the same way as inserts and updates in LSM-trees.}

\vspace{1mm}
\noindent\textbf{\revise{The sensitivity of parameters for dynamic mode.}} \revise{We have detailed the detection method in Section \ref{sec:dynamic}. In our default setting, we set $\tau=10\%$ and $p=10k$ ($k$ stands for thousand), where we split the entire dynamic workload into sub-workloads of $10k$ operations each. For example, if the last reconfiguration was triggered when $w=30\%$, and in the current sub-workload $w=15\%$, we start the reconfiguration; if $w=25\%$, we keep the configuration unchanged and move on to the next sub-workload.}

\revise{We examine the sensitivity of $\tau$ and $p$ in Figure \ref{fig:detect}. Decreasing $p$ below $50k$ improves the post-tuning system latency, achieving 72\% $\sim$ 75\% normalized latency, but when $p$ is smaller than $1k$ operations degrades performance as it cannot accurately inspect the true percentage of the randomly generated workloads. For $\tau$, the post-tuning system latency becomes stable when the threshold is below 20\%. Further, decreasing $p$ and $\tau$ will both increase the transition I/Os, which is the I/O cost incurred during reconfiguration, when reconfiguration becomes more frequent. However, the transition I/Os will stabilize when $\tau<20\%$ and $p<50k$, and compared to the post-tuning system I/Os, they are relatively small. This is because the lazy transition strategy we adopted ensures that the LSM-tree structure only changes during natural compaction, even though with more frequent reconfigurations. We also note that when the workload change gradually, which is often the case in practice, the tuned LSM-tree does not need to change dramatically. }

\revise{In summary, moderate values of $\tau$ (5\% to 20\%) and $p$ ($5k$ to $50k$) would provide a range where the system performance after tuning is less sensitive to the parameters.}


\begin{figure}[t]
  \centering
  \hspace{-2mm}
  \begin{subfigure}[b]{0.24\textwidth}
            \includegraphics[width=\textwidth]{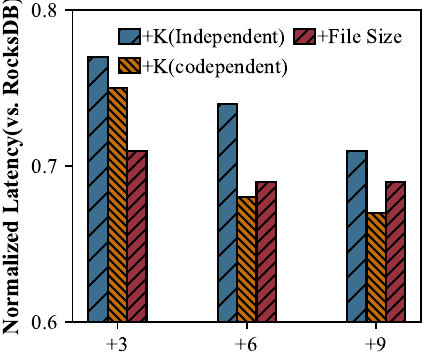}
   \captionsetup{font=footnotesize,skip=1pt,margin={-2pt,-2pt}}
    \caption{\scriptsize Samples of new parameters}
    \label{fig:new_para}
  \end{subfigure}
  \hspace{0.8mm}
  \begin{subfigure}[b]{0.24\textwidth}
    \includegraphics[width=\textwidth]{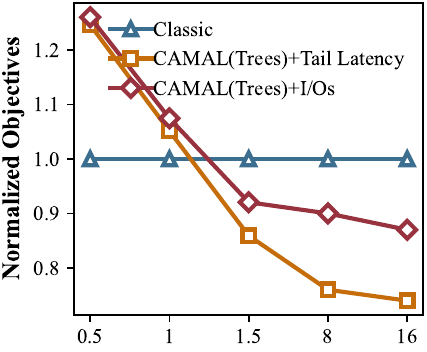}
   \captionsetup{font=footnotesize,skip=1pt,margin={-2pt,-2pt}}
    \caption{\scriptsize Objectives sampling hours}
    \label{fig:obj}
  \end{subfigure}
    \hspace{0.8mm}
  \begin{subfigure}[b]{0.24\textwidth}
    \includegraphics[width=\textwidth]{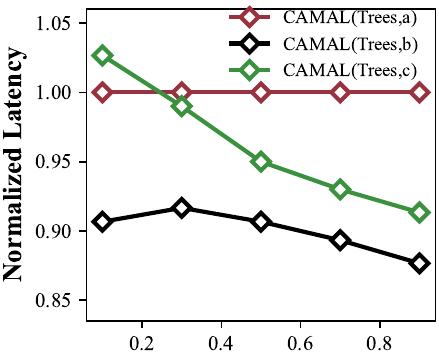}
   \captionsetup{font=footnotesize,skip=1pt,margin={-2pt,-2pt}}
    \caption{\scriptsize Strategy vs. Skewness}
    \label{fig:comp_dist_stra}
  \end{subfigure}
  \hspace{0.8mm}
    \begin{subfigure}[b]{0.235\textwidth}
    \includegraphics[width=\textwidth]{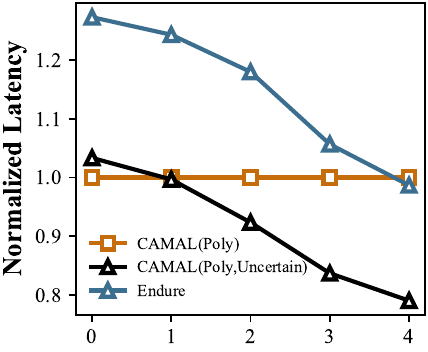}
   \captionsetup{font=footnotesize,skip=1pt,margin={-2pt,-2pt}}
        \caption{\scriptsize Uncertainty Region}
    \label{fig:comp_uncert}
  \end{subfigure}
\caption{{\ML} demonstrates versatility with potential extensions to new parameters, alternative objectives, and varied data distributions.}
\end{figure}

\subsection{Model Extension and Discussion}
\label{subsec:case}
\noindent\revise{In this section, we broaden the application across various scenarios to validate the generalization capabilities of {\ML} and to explore its potential further. This involves evaluating the model against diverse data distributions, incorporating a wider array of tunable parameters, exploring the use of alternative learning objectives.}

\vspace{1mm}
\noindent\textbf{Performance under different data distributions.} As shown in Figure~\ref{fig:comp_zipf}, compared with traditional tuning methods, {\ML} can accurately capture the actual distribution of accessed data blocks, which further allows more precise configurations. In addition, with an increase in the skew coefficient, certain blocks experience higher access frequency, resulting in improved efficiency of the block cache. Consequently, {\ML} facilitates more rational memory allocation, wherein the block cache receives a greater allocation of memory when the data distribution exhibits higher skewness.

As previously discussed, we have proposed three training options that can be employed if information regarding the data distribution is available. As shown in Figure~\ref{fig:comp_dist_stra}, strategies (b) and (c) can each enhance the optimization of strategy (a) by approximately up to 15\% when skewness increases. This improvement can be attributed to more effective cache allocation based on the known data distribution, as cache sensitivity to data distribution is demonstrated in many works~\cite{yang2020leaper,berthet2017approximation}.

\vspace{1mm}
\noindent\textbf{Extend to uncertain workloads.} 
We follow the setting in Endure~\cite{huynh2022endure}, where the expected uncertainty regains are the same as the observed regions. As shown in Figure~\ref{fig:comp_uncert}, the original {\ML}, which does not take uncertainty into account, can outperform Endure for reasonable uncertain ranges. This is because {\ML}'s tuning performance is significantly better than the classic tuning adopted in Endure, so small uncertainty does not have enough impact to cause change. If we take workload uncertainty into account, the effect becomes more pronounced compared to the original settings. This is because we choose the settings that lead to the samples with the lowest average latency in the uncertainty region based on accurate estimations.

\vspace{1mm}
\noindent\revise{
\textbf{The possibility of involving more parameters based on group-wise sampling.} In addition to the tunable parameters explored in previous experiments, numerous specific or practical parameters exist in various LSM-tree systems. 
Here we discuss one possible solution to extend {\ML} for considering these parameters. Through theoretical and experimental analysis, 
we observe that some parameters are strongly dependent. Therefore, we can group strongly dependent parameters together, and then sample and fine-tune the parameters within one group before proceeding to another group. For instance, let us consider two more parameters -- the number of runs at each level ($K$), and the file size of each SST. We will then select to group $K$ and $T$, as they are strongly dependent~\cite{mo2023learning,huynh2023flexibility, liu2024structural}. Meanwhile, the SST file size does not have an explicit correlation with other parameters we chose. 
Within each group, we would sample the parameters together. For instance, our solution groups $T$ and $K$ in a 2-dimensional sampling space. In the active learning cycle, we first calculate the theoretical optimal pair $(T^*, K^*)$. Then, based on the given sampling budget, we sample these two parameters in a 2-dimensional neighborhood. The subsequent steps follow Algorithm\ref{alg:decopuled}, except that we obtain the tuned combination $(T', K')$ using ML models, instead of just tuning $T'$. For the file size, as it has not been shown to have an explicit correlation with our existing parameters ($T$, $K$, and memory-related parameters), we sample it after the other parameters. Following our decoupled active learning paradigm, we initiate with a practically reasonable file size (e.g., the default file size in RocksDB) as the sampling center and then obtain samples from its neighbors.}

\revise{As shown in Figure~\ref{fig:new_para}, we evaluate two strategies for incorporating the parameter $K$. When sampling $K$ and $T$ separately, which means we first obtain a tuned $T^*$ and then sample $K$ based on $T^*$, the optimization performance is noticeably inferior compared to when they are sampled together. This is because $K$ and $T$ are closely correlated, and sampling them independently is likely to result in a local optimum for a narrow range of sampled $T$. For co-dependent sampling, we also need to add extra samples to achieve better performance, as the correlation requires more data to be accurately learned. On the other hand, we sample the file size independently since we have not identified any explicit correlation between it and other tunable parameters in this study. The optimization impact is not as pronounced as with other parameters, because it primarily affects space amplification rather than I/Os or latency~\cite{dayan2022spooky}.}

\revise{We acknowledge the complexity inherent in tuning LSM-tree parameters, as they can be interdependent, slightly dependent, or only show dependency under certain queries. This complexity guides us to thoughtfully categorize parameters within the {\ML} framework for practical examination: (1) The first category consists of parameters that are crucial to the LSM-tree's structure, notably $T$ and $K$. (2) The second category encompasses memory-related parameters, such as the write buffer, Bloom filter, and block cache. (3) The third category is dedicated to detailed merge policies, like file size, along with system-specific parameters, such as sub-compaction and compaction granularity. This categorization reflects observed correlations: parameters within each category tend to have strong or inherent correlations, whereas those across different categories generally display relatively implicit connections. Although this classification might not align perfectly with every theoretical consideration, the flexible nature of ML models to generalize could help mitigate the gaps in terms of performance. Exploring the integration of additional parameters further, particularly in refining our understanding and enhancing the model's applicability, remains a compelling direction for future research endeavors. }


\vspace{1mm}
\noindent\revise{\textbf{The possibility of using other optimization objectives in {\ML}.} Given that machine learning methods can model a wide range of learning objectives, we have explored two additional representative objectives: tail latency and I/O operations. These have been employed both as learning objectives and evaluation metrics within {\ML}. The results demonstrate their feasibility as optimization objectives. The tail latency we focus on is the 90th percentile, a common metric in SQL databases~\cite{sun13end,wang2020we}. As illustrated in Figure~\ref{fig:obj}, with sampling hours increased to 1.5, we observe approximately 15\% lower latency compared to a well-tuned RocksDB configuration. When optimizing for I/O operations, the improvement is less pronounced, achieving only 8\% lower latency than RocksDB at 1.5 sampling hours. This reduced effectiveness can be attributed to the complex correlations between I/O operations and configurations, primarily due to the randomness introduced by compaction and block cache activities, which can adversely affect the learning effectiveness of {\ML}. Despite these challenges, there remains potential for employing other objectives within the {\ML} framework in certain scenarios, such as those where outlier response times are critical or I/O operations are a bottleneck.}



\section{Related Work}
\label{sec:related}
\noindent\textbf{Machine Learning for Database Systems.} There have been many extensive studies focused on using machine learning to optimize storage systems, particularly for SQL databases~\cite{wang2022wetune,butrovich2022tastes,ma2021mb2}. Ottertune~\cite{van2017automatic} uses machine learning techniques to analyze and tune database configuration knobs and give recommendation settings based on statistics collected from the database. Pavlo \textit{et al.}~\cite{pavlo2017self,pavlo2021make} and Aken \textit{et al.}~\cite{van2021inquiry} introduced the concept of a {\it self-driving} database that can make automated decisions with machine learning models. \revise{Abu-Libdeh \textit{et al.}~\cite{abu2020learned} present several design decisions involved in integrating learned indexes into Bigtable~\cite{chang2008bigtable} and demonstrate significant performance improvements as a result.} Meanwhile, Qtune~\cite{li2019qtune}, CDBTune~\cite{zhang2019end}, and CDBTune+~\cite{zhang2021cdbtune+} are reinforcement learning-based tuning methods for databases. 
There are also studies concentrating on cost estimation~\cite{sun13end,marcus2019plan} and cardinality estimation~\cite{hilprecht2019deepdb,dutt2019selectivity} based on the queries to optimize the plan.

The utilization of machine learning in LSM-tree has rarely been considered, particularly in the context of instance-optimization. Bourbon~\cite{dai2020wisckey} incorporates machine learning into the LSM-tree, which uses learned index~\cite{kraska2018case} to improve the fence pointers. Leaper~\cite{yang2020leaper} employs a learned prefetcher to improve the block cache. \revise{The most recent work, RusKey~\cite{mo2023learning}, is the first to use reinforcement learning to optimize the compaction policy of an LSM-tree.} To the best of our knowledge, we have the first attempt at applying active learning to the instance-optimized LSM-based key-value stores.

\vspace{1mm}
\noindent\textbf{LSM-tree Optimization.} There are numerous studies on LSM-tree optimizations~\cite{alkowaileet2019lsm,ahmad2015compaction,absalyamov2018lightweight,balmau2019silk,balmau2017triad,bortnikov2018accordion,chan2018hashkv,chatterjee2021cosine,dayan2017monkey,dayan2018dostoevsky,dayan2019log,dayan2022spooky,dayan2021chucky,huynh2022endure,golan2015scaling,huang2019x,idreos2019design,kim2020robust,knorr2022proteus,luo2019performance,raju2017pebblesdb,lu2017wisckey,luo2020rosetta,luo2020breaking,ren2017slimdb,sarkar2020lethe,sarkar2022constructing,sears2012blsm,shetty2013building,thonangi2017log,vinccon2018noftl,wang2014efficient,wu2015lsm,zhang2018surf,zhang2020fpga,zhang2018elasticbf,zhu2021reducing,sarkar2023lsm,mun2022lsm,sarkar2022dissecting}. Monkey~\cite{dayan2017monkey} first systematically formulates the theoretical cost models referred in this paper for an LSM-tree by estimating the expected I/O cost. Based on the model, Monkey co-tunes the compaction policy, the memory allocated to the write buffer and the Bloom filter to locate an optimal LSM-tree design with the minimal
I/O cost for a given workload. Dostoevsky~\cite{dayan2018dostoevsky} shows that existing compaction policies (tiering and leveling) cannot fully
trade between read costs and write costs. Therefore, it proposes
Fluid LSM-tree to enable hybrid compaction policies. Cosine~\cite{chatterjee2021cosine}
presents a more meticulous I/O cost model that is aware of workload distribution for key-value stores on the cloud. Endure~\cite{huynh2022endure} introduces a pipeline to jointly tune the compaction policy, size ratio, and allocated memory, and optimize the performance when workload uncertainty is involved. Spooky~\cite{dayan2022spooky} designs a new compaction granulation for LSM-tree by partitioning data more reasonably and cheapening garbage collection. Compared with these works, our study is the first one that employs both theoretical cost analysis and ML model to jointly tune the common parameters in an LSM-tree.

\section{Conclusion}
\label{sec:conclusion}
{We present {\ML}, a complexity-analysis driven ML-aided tuner for LSM-tree based key-value stores. The core idea is to use machine learning to model the actual cost of workloads while using both theoretical analysis and active learning paradigm to help prune the sampling space. Compared to well-tuned RocksDB, {\ML} achieves an average reduction in system latency by 28\%. Moreover, it significantly reduces training costs by up to 90\% compared to conventional ML approaches, thereby enhancing practical usability.}

\begin{acks}
This research is supported by NTU-NAP startup grant (022029-00001) and Singapore MOE AcRF Tier-2 grant MOE-T2EP20223-0004. We thank the anonymous
reviews for their valuable suggestions.

\end{acks}
\bibliographystyle{ACM-Reference-Format}

\end{document}